\newtheorem{theorem}{Theorem}
\newtheorem{lemma}{Lemma}
\newtheorem{assumption}{Assumption}
\newtheorem{definition}{Definition}
\newtheorem{remark}{Remark}
\newtheorem{corollary}{Corollary}
\DeclareMathOperator*{\argmin}{arg\,min}
\newcommand{\specialcell}[1]{\ifmeasuring@#1\else\omit$\displaystyle#1$\ignorespaces\fi}
\newcommand\copyrighttext{%
  \footnotesize \copyright 2023 IEEE. This work has been accepted for publication at the 62nd IEEE Conference on Decision and Control (CDC), Singapore, 2023. Personal use of this material is permitted. Permission from IEEE must be obtained for all other uses, in any current or future media, including reprinting/republishing this material for advertising or promotional purposes, creating new collective works, for resale or redistribution to servers or lists, or reuse of any copyrighted component of this work in other works.}
\newcommand\copyrightnotice{%
\begin{tikzpicture}[remember picture,overlay]
\node[anchor=south,yshift=10pt] at (current page.south) {\fbox{\parbox{\dimexpr\textwidth-\fboxsep-\fboxrule\relax}{\copyrighttext}}};
\end{tikzpicture}%
}
\title{\LARGE \bf
On the Finite-Time Behavior of Suboptimal Linear\\
Model Predictive Control 

}
\author{Aren Karapetyan, Efe C. Balta, Andrea Iannelli and John Lygeros
\thanks{This work has been supported by the Swiss National Science Foundation under NCCR Automation (grant agreement  $51\text{NF}40\_180545$).}
\thanks{A. Karapetyan and J. Lygeros are with the Automatic Control Laboratory, ETH Z\"urich, 8092 Z\"urich, Switzerland
       (E-mails: {\tt\small \{akarapetyan, lygeros\}@control.ee.ethz.ch}).}%
\thanks{E. C. Balta  is  with  Inspire AG, 8005 Zürich, Switzerland (E-mail: {\tt\small efe.balta@inspire.ch}).}
\thanks{A. Iannelli is with the Institute for Systems Theory and Automatic Control, University of Stuttgart,  Stuttgart 70569,  Germany
        (E-mail: {\tt\small andrea.iannelli@ist.uni-stuttgart.de}).}
}
\begin{document}

\maketitle
\thispagestyle{empty}
\pagestyle{empty}

\copyrightnotice

\begin{abstract}

Inexact methods for model predictive control (MPC), such as real-time iterative schemes or time-distributed optimization, alleviate the computational burden of exact MPC by providing suboptimal solutions. While the asymptotic stability of such algorithms is well studied, their finite-time performance  has not received much attention. In this work, we quantify the performance of suboptimal linear model predictive control in terms of the additional closed-loop cost incurred due to performing only a finite number of optimization iterations. Leveraging this novel analysis framework, we propose a novel suboptimal MPC algorithm with a diminishing horizon length and finite-time closed-loop performance guarantees. This analysis allows the designer to plan a limited computational power budget distribution to achieve a desired performance level. We provide numerical examples to illustrate the algorithm's transient behavior and computational complexity.

\end{abstract}

\section{Introduction}
\label{sec:introduction}

Model Predictive Control (MPC) is one of the most ubiquitous optimal control methods thanks to its capability of handling state and input constraints and providing closed-loop performance guarantees \cite{kouvaritakis2016model}. The MPC algorithm relies on solving a constrained optimization problem at each sample time and hence requires a fast enough computational unit to handle it. In many practical examples, a short sampling time, combined with a large-scale optimization problem, can make the method infeasible to operate. This limitation has motivated the development of suboptimal MPC schemes for applications with limited computational capacity, e.g. \cite{zeilinger2011real,richter2011computational}. With suboptimal methods, at each time step, the optimization problem is solved approximately but continues to conform to certain performance requirements. While many works study the closed-loop stability of such methods, the characterization of transient performance in terms of the incurred cost has not been completely addressed. Such an analysis can be beneficial not only for certifying suboptimality bounds given a limited computational budget but also, as we show here, for designing time-varying MPC schemes, that  allocate computational budget adaptively.

The closed-loop stability of suboptimal MPC is well studied, e.g. in \cite{scokaert1999suboptimal, mcgovern1999closed, graichen2010stability, zeilinger2011real, rubagotti2014stabilizing}. Asymptotic stability is usually guaranteed by the consideration of a suitable Lyapunov function. In \cite{richter2011computational}, the authors derive a lower bound on the number of optimization iterations of a fast gradient method to achieve a certain suboptimality level of the MPC cost, however, the closed-loop stability is not analyzed.  
Time-distributed optimization \cite{liao2020time,liao2021analysis} or real-time iterative algorithms \cite{diehl2005nominal,zanelli2020lyapunov} are examples of another approach that considers the combined system-optimizer dynamics. These methods perform only a finite number of iterations of an optimization problem at each timestep. The asymptotic stability of  time-distributed MPC (\emph{TD-MPC}) is studied in \cite{liao2020time} for discrete-time non-linear models with state and input constraints, and an explicit form for a Lyapunov function is derived in \cite{zanelli2020lyapunov} for the same setting. Discrete-time linear models with a quadratic cost objective (LQMPC) are studied in \cite{liao2021analysis} and a region of attraction (ROA) estimate is derived in \cite{leung2021computable}.

In this work, we consider the transient performance of the suboptimal time-distributed optimization for LQMPC. In particular, our contribution is threefold. Firstly, we propose scheduling the number of iterative optimization steps, $\ell_k$, in advance, and allowing them to be time-varying based on the available computational budget.  Building on \cite{zanelli2020lyapunov,leung2021computable} we derive an explicit form for the rate of decay of the exponentially stable suboptimal dynamics under certain conditions on $\ell_k$-s. Secondly,  we study the  transient performance of this scheme by quantifying its {\em incurred suboptimality}, which we define as  the additional incurred closed-loop cost due to the approximate solution of the optimization problem. Finally, using our new analysis, we propose a diminishing horizon suboptimal MPC algorithm, called \emph{Dim-SuMPC}, and  quantify its finite-time performance. The proposed algorithm maintains the recursive feasibility and stability properties of \emph{TD-MPC} while decreasing the prediction horizon length at certain switching times. This decrease reduces the problem size and, hence, the time complexity. We provide numerical examples to illustrate the performance of the proposed scheme in terms of both cost and computational time.

\textit{Notation}: The set of positive real numbers is denoted by $\mathbb{R}_{+}$, and the set of non-negative integers by $\mathbb{N}$. For a given vector $x$, its  Euclidean norm is denoted by $\|x\|$, and the two-norm weighted by some matrix $Q\succ 0$  by $\|x\|_Q = \sqrt{x^{\top}Qx}$.  For a matrix $W \succ 0$ the spectral radius and the spectral norm are denoted by $\rho(W)$, and  $\|W\|$, respectively. Given $M\succ0$, the $\lambda_M^-(W)$ and $\lambda_M^+(W)$ denote the minimum and maximum eigenvalues of ${M}^{-\frac{1}{2}}W{M}^{-\frac{1}{2}}$, and recall that for any vector $x$, they satisfy $\lambda_M^{-}(W)\|x\|_M^2 \leq\|x\|_W^2 \leq \lambda_M^{+}(W)\|x\|_M^2$. The projection of a vector $x$ on a nonempty, closed convex set $\mathcal{A}$ is denoted by $\Pi_\mathcal{A}[x] := \argmin_{y\in \mathcal{A}}\|x-y\|$.

\section{Problem Formulation and Preliminaries}
\label{sec:problem_formulation}
We consider discrete-time linear time-invariant systems
\begin{equation*}
    x_{k+1} = Ax_k+Bu_k,
\end{equation*}
where $x_k\in \mathbb{R}^n$ and $u_k \in \mathbb{R}^m$ denote the state and control input at time $k$, respectively, $A\in \mathbb{R}^{n\times n}$, and $B\in\mathbb{R}^{n\times m}$. Given an initial state $x_0$, the control objective is to find the sequence of control inputs $\boldsymbol{u} = [u_0^{\top} \hdots u_{T-1}^{\top}]^{\top}$ that minimizes the finite-time cost
\begin{equation*}
    J_T(x_0,\boldsymbol{u}) = \|x_T\|^{2}_P+\sum_{k=0}^{T-1}\|x_k\|^{2}_Q+ \|u_k\|^2_R,
\end{equation*}
where $Q\in \mathbb{R}^{n \times n}$ and $R \in \mathbb{R}^{m\times m}$ are design matrices and $P$ is taken to be the solution of the discrete Algebraic Riccati Equation (DARE), $P = Q + K^\top R K + (A-BK)^\top P (A-BK)$, with $K = (R+B^\top PB)^{-1}(B^\top PA)$. In addition to the above optimality requirement, control inputs must satisfy
$u_k \in \mathcal{U}$ for all $k>0$ where  $\mathcal{U} \subseteq \mathbb{R}^m$ is a constraint set. We assume the following standard assumptions hold; these ensure that the problem is well posed, similar to \cite{mayne2000constrained}.

\begin{assumption}

\label{assum:well_posed}(Well-posed problem)
\begin{enumerate}[label=\roman*.] 
    \item The pair $(A,B)$ is stabilizable, $Q\succ 0$, $R\succ 0$.
    \item The input constraint set $\mathcal{U}$ is closed, convex, and contains the origin. 
    \label{assum:convex_u}
\end{enumerate}
\end{assumption}
At each timestep $0\leq k<T$, the model predictive controller solves the following parametric optimal control problem (POCP)
\begin{equation}
\label{eq:MPC_POCP}
\begin{split}
    \mu^{\star}(x_k) := &\argmin_{\boldsymbol{\nu}} \;  J_N(\xi_0,\boldsymbol{\nu})\\
     \text{s.t.} \; & \xi_{i+1} = A\xi_i +B\nu_i, \; i \!= \!0,\dots, N-1,\\
    &\xi_0 = x_k, \; \nu_i \in \mathcal{U}, \; i \!= \!0,\dots, N-1,
\end{split}
 \end{equation}
where $N$ is the prediction horizon length, and  $\boldsymbol{\nu} = [{\nu}_0^{\top} \hdots {\nu}_{N-1}^{\top}]^{\top}$ denotes the predicted input vector. The solution to \eqref{eq:MPC_POCP} for a given initial state (parameter) $x\in\mathbb{R}^n$, $\mu^{\star}(x):\mathbb{R}^n \rightarrow \mathbb{R}^{Nm}$, is referred to as the optimal mapping. The optimal cost attained by this mapping is denoted by $V_N(x):= J_N(x,\mu^{\star}(x))$, and serves as an approximate value function for the problem\footnote{When clear from the context, we drop the explicit dependence of $V_N(x)$ on $N$ and use $V(x)$.}. 

For each $k$, the first element of $\mu^{\star}(x_k)$ is applied to the system and the process is repeated in a receding horizon fashion. The optimal state evolution under this optimal MPC policy, starting from some $x_0^{\star}:=x_0$, is then given by
\begin{equation}
\label{eq:optimal_system}
    x^{\star}_{k+1} = Ax_k^{\star} +\overline{B}\mu^{\star}(x_k^{\star}):= f(x^{\star}_k), \; \forall k\geq 0,
\end{equation}
where , $S := \left[I_{m\times m}~\boldsymbol{0}~\hdots~\boldsymbol{0}\right] \in \mathbb{R}^{m \times Nm}$ and $\overline{B} := BS$.  

Problem \eqref{eq:MPC_POCP} is a parametric quadratic program, and, given a $x\in \mathbb{R}^n$, can be represented in an equivalent condensed form $V(x) =\min_{\boldsymbol{\nu}\in \mathcal{N}}  \|(x,\boldsymbol{\nu})\|_M^2$, with  $\mathcal{N} = \mathcal{U}^N \subseteq \mathbb{R}^{Nm}$
\begin{equation}
\label{eq:condensed_matrices}
    M = \begin{bmatrix}
W & G^{\top}\\
G & H
\end{bmatrix},
\end{equation}
with $H\in \mathbb{R}^{Nm\times Nm}$, $W\in \mathbb{R}^{Nm \times Nm}$ and,  $G \in \mathbb{R}^{Nm \times n}$ defined in Appendix \ref{sec:matrices_definition}.

As computing $\mu^{\star}(x)$ may be prohibitive, one can consider suboptimal solutions computed by  using a fixed number of optimization iterations. Specifically, given $x\in \mathbb{R}^n$ and an input vector $\boldsymbol{\boldsymbol{\nu}} \in \mathbb{R}^{Nm}$, consider the operator that performs one step of the projected gradient method (PGM)
\begin{equation}
\label{eq:pgm_operator}
    \mathcal{T}(x,\boldsymbol{\boldsymbol{\nu}}):= \Pi_{\mathcal{N}}[\boldsymbol{\nu} - \alpha \nabla{\boldsymbol{\nu}}{J_N}(x,\boldsymbol{\nu}) ],
\end{equation}
where $\alpha\in\mathbb{R}$ is a step size. Applying \eqref{eq:pgm_operator} iteratively for some  $\ell_k \in \mathbb{N}$ times, provides an approximation for the optimal input, and hence the optimal policy. The combined dynamics of the system and the optimizer are given by\footnote{The subscript of $\ell_k$ is dropped when it is taken to be a constant.}
\begin{subequations}
\label{eq:suboptimal_combined}
\begin{align}
  \label{eq:suboptimal_input}
    z_{k} &= \mathcal{T}^{\ell_k}(x_{k},z_{k-1}),\\
    x_{k+1} & = Ax_k + \overline{B}z_k,
    \label{eq:suboptimal_system}
\end{align}
\end{subequations}
where for some $l \in \mathbb{N}, x\in\mathbb{R}^n$ and $\boldsymbol{\nu}\in\mathbb{R}^{Nm}$, we define
\begin{equation*}
    \mathcal{T}^{l}(x,\boldsymbol{\nu}) = \mathcal{T}(x,\mathcal{T}^{l-1}(x,\boldsymbol{\nu})),
\end{equation*}
starting with $\mathcal{T}^0(x,\boldsymbol{\nu}) = \boldsymbol{\nu}$. The suboptimal state evolution \eqref{eq:suboptimal_system} is equivalently described by 
\begin{equation}
    x_{k+1}  = \underbrace{Ax_k + \overline{B}{\mu}^{\star} (x_k)}_{f(x_k)} + \overline{B} d(z_k,x_k),
    \label{eq:suboptimal_perturbed}
\end{equation}
where $d(z,x): = z - \mu^{\star}(x)$, $z \in \mathbb{R}^{Nm}, x\in \mathbb{R}^n$, is thought of as a disturbance acting on the optimal  dynamics \eqref{eq:optimal_system}, introduced due to suboptimality. We denote the combined system-optimizer state $s_k:= [x_k^\top~z_k^\top]^\top$ for all $k\geq0$.

In this work, we are interested in the transient performance of the closed-loop suboptimal dynamics \eqref{eq:suboptimal_perturbed}. This is captured in terms of the algorithm's incurred additional cost due to its approximations, i.e. the {\em incurred suboptimality}, defined as
\begin{equation}
    \mathcal{R}(x_0;\ell_{[0,\dots,T-1]}):= J_T(x_0,\boldsymbol{u}^{\ell}) - J_T(x_0,\boldsymbol{u}^{\star}),
    \label{eq:suboptimality_gap}
\end{equation}
where, $\boldsymbol{u}^{\ell}:= [z_0^\top\hdots z_{T-1}^\top]^\top$ denotes the sequence of inputs generated by the suboptimal policy on the suboptimal trajectory generated by \eqref{eq:suboptimal_perturbed}, and $\boldsymbol{u}^{\star}: = [\mu^{\star \top}(x_0)\hdots \mu_{T-1}^{\star \top}(x_{T-1}^\star)]^\top$ the equivalent for the optimal policy and the optimal trajectory generated by \eqref{eq:optimal_system}. Note that $\mathcal{R}$ is a function of $\ell_k$-s which are the main parameters indicating the level of approximation in the policy. Quantifying the transient accumulated suboptimality as a comparative metric due to suboptimal choices with respect to a more powerful benchmark is inspired by online learning, where a similar notion of regret is used, and the suboptimality is due to some uncertainty in the problem rather than limited computational budget. Although the system to be controlled is linear time-invariant, the closed loop systems of interest, \eqref{eq:optimal_system} and \eqref{eq:suboptimal_combined}, under the optimal and suboptimal policies, respectively, are nonlinear. Hence, our results make use of the standard notion of local exponential stability.
\begin{definition}
Consider a nonlinear autonomous system $x_{k+1} = f(x_k)$ for all $k\in \mathbb{N}$, with $f:\mathbb{R}^n \rightarrow\mathbb{R}^n$ having an equilibrium point $\hat{x} = \boldsymbol{0}$. The system is said to be locally exponentially stable  with a decay rate of $\lambda \in (0,1)$ if there exist $\gamma, c \in \mathbb{R}_{+}$, such that for all $k\geq 0$
\begin{equation*}
    \|x_k\| \leq \gamma \|x_0\|\lambda^k, \qquad \forall \: \|x_0\|\leq c.
\end{equation*} 
\end{definition}

\section{closed-loop Properties of MPC}

\subsection{Optimal MPC}
\label{sec:optimal_mpc}
In this subsection we review the properties of the optimal mapping $\mu^{\star}(x)$, which are then used for the analysis of the ``perturbed" suboptimal dynamics \eqref{eq:suboptimal_perturbed}.
We start with the regularity properties of the optimal mapping.
\begin{lemma} \cite[Corollary~2]{liao2021analysis}
\label{lem:lispchitz}
    For any $x,y\in\mathbb{R}^n$, the optimal solution mapping, $\mu^{\star}(x)$, satisfies
    \begin{align*}
        &\|\mu^{\star}(x)-\mu^{\star}(y)\|\leq \|H^{-\frac{1}{2}}\|\|G(x-y)\|_{H^{-1}} \leq L \|x-y\|,\\
        &\left<\mu^{\star}(x)-\mu^{\star}(y),G(x-y)\right>\leq -\|\mu^{\star}(x)-\mu^{\star}(y)\|_H^2,
    \end{align*}
    with a Lipschitz constant $L: = \|H^{-\frac{1}{2}}\|\cdot \|H^{-\frac{1}{2}}G\|$.
\end{lemma}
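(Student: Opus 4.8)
The plan is to recognize the parametric quadratic program defining $\mu^{\star}(x)$ as a metric projection in the inner product induced by $H$, and then exploit the firm nonexpansiveness of such projections. First I would note that, since the term $\|x\|_W^2$ in $\|(x,\boldsymbol{\nu})\|_M^2$ is independent of $\boldsymbol{\nu}$, minimizing over $\boldsymbol{\nu}\in\mathcal{N}$ is equivalent to minimizing $\boldsymbol{\nu}^{\top}H\boldsymbol{\nu} + 2\boldsymbol{\nu}^{\top}Gx$. Completing the square in the $H$-norm gives $\boldsymbol{\nu}^{\top}H\boldsymbol{\nu} + 2\boldsymbol{\nu}^{\top}Gx = \|\boldsymbol{\nu} + H^{-1}Gx\|_H^2 - \|Gx\|_{H^{-1}}^2$, so that $\mu^{\star}(x)$ is precisely the projection of $-H^{-1}Gx$ onto the closed convex set $\mathcal{N}$ in the $H$-weighted inner product $\langle u,v\rangle_H := u^{\top}Hv$. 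Here I use that $H\succ 0$ (which follows from $Q,R\succ 0$ under Assumption~\ref{assum:well_posed}), so this inner product is well defined, and that $\mathcal{N}=\mathcal{U}^N$ is closed and convex by Assumption~\ref{assum:well_posed}.\ref{assum:convex_u}.

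Next I would invoke the variational characterization of this projection: optimality of $\mu^{\star}(x)$ over the convex set $\mathcal{N}$ means $\langle -H^{-1}Gx - \mu^{\star}(x),\, \boldsymbol{\nu} - \mu^{\star}(x)\rangle_H \leq 0$ for all $\boldsymbol{\nu}\in\mathcal{N}$. Writing this inequality for both $x$ and $y$, testing each with the other point's optimizer, and adding the two resulting inequalities yields the firm-nonexpansiveness estimate $\|\mu^{\star}(x)-\mu^{\star}(y)\|_H^2 \leq \langle \mu^{\star}(x)-\mu^{\star}(y),\, -H^{-1}G(x-y)\rangle_H$. Since $\langle u,\, -H^{-1}G(x-y)\rangle_H = -u^{\top}G(x-y)$, the right-hand side equals $-\langle \mu^{\star}(x)-\mu^{\star}(y),\, G(x-y)\rangle$, which is exactly the second claimed inequality.

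The Lipschitz bound then follows from the second inequality. Abbreviating $\Delta := \mu^{\star}(x)-\mu^{\star}(y)$, I would rewrite $\langle \Delta,\, G(x-y)\rangle = \langle \Delta,\, H^{-1}G(x-y)\rangle_H$ and apply Cauchy--Schwarz in the $H$-inner product to obtain $\|\Delta\|_H^2 \leq \|\Delta\|_H\,\|H^{-1}G(x-y)\|_H = \|\Delta\|_H\,\|G(x-y)\|_{H^{-1}}$, using $\|H^{-1}G(x-y)\|_H = \|G(x-y)\|_{H^{-1}}$. Dividing by $\|\Delta\|_H$ (the case $\Delta=\boldsymbol{0}$ being trivial) gives $\|\Delta\|_H \leq \|G(x-y)\|_{H^{-1}}$. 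Converting to the Euclidean norm via the elementary relation $\|\Delta\| = \|H^{-\frac{1}{2}}(H^{\frac{1}{2}}\Delta)\| \leq \|H^{-\frac{1}{2}}\|\,\|\Delta\|_H$ produces the first stated inequality, and bounding $\|G(x-y)\|_{H^{-1}} = \|H^{-\frac{1}{2}}G(x-y)\| \leq \|H^{-\frac{1}{2}}G\|\,\|x-y\|$ yields the explicit constant $L = \|H^{-\frac{1}{2}}\|\cdot\|H^{-\frac{1}{2}}G\|$.

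I expect the only real subtlety to be the first step: correctly identifying the right weighting, namely the $H$-inner product rather than the Euclidean one, so that the projection is genuinely firmly nonexpansive and the cross term collapses cleanly to $-\langle \Delta,\, G(x-y)\rangle$. Everything after that is a standard projection argument combined with Cauchy--Schwarz and elementary norm bounds; no compactness or strong convexity beyond $H\succ 0$ is needed, and the constraint set $\mathcal{N}$ enters only through its convexity.
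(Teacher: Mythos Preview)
Your argument is correct: recasting $\mu^{\star}(x)$ as the $H$-weighted projection of $-H^{-1}Gx$ onto $\mathcal{N}$, applying the variational inequality characterization of that projection, and then Cauchy--Schwarz in the $H$-inner product yields both claimed bounds cleanly. The paper does not actually supply its own proof here; it only cites \cite{liao2021analysis} and \cite{bemporad2002explicit}, and your derivation is essentially the standard parametric-QP argument underlying those references, so there is nothing to contrast.
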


The proof follows from the parametric quadratic program structure of the MPC problem and can be found in \cite{liao2021analysis} or \cite{bemporad2002explicit} with an explicit MPC point of view.

As shown in \cite{limon2006stability, leung2021computable},  system \eqref{eq:optimal_system} is asymptotically stable with the ROA estimate
\begin{equation*}
    \Gamma_N: = \{x\in\mathbb{R}^n \mid \psi(x)\leq r_N\},
\end{equation*}
where $\psi(x):=\sqrt{V(x)}$, $\textstyle{d = c\cdot {\lambda^{-}(Q)}/{\lambda^{+}(P)}}$, $r_N = \sqrt{N d + c}$ and $c>0$ is such that the following set is non-empty
\begin{equation*}
    \Omega = \{x\in\mathbb{R}^n \mid \|x\|_P^2\leq c , -Kx \in \mathcal{U} \}.
\end{equation*}

\begin{lemma}
\label{lem:optimal_stability} 
    The value function $V$, is a local Lyapunov function for the optimal closed-loop system \eqref{eq:optimal_system} with a ROA estimate $\Gamma_N$, satisfying
    \begin{align}
        \label{eq:V_bounds}
        \|x\|_P^2 \leq V(x) \leq \|x\|^2_W\\
        \label{eq:exp_rate_optimal_v}
        V\left(f(x)\right) \leq \beta^2 V(x),
    \end{align}
with $\beta = \sqrt{1 - \lambda_W^{-}(Q)} \in (0,1)$ . System \eqref{eq:optimal_system} is then locally exponentially stable in $\Gamma_N$ with a rate of decay $\beta$.

\end{lemma}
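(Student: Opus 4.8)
The plan is to treat $V$ as a Lyapunov candidate: establish the sandwich bounds \eqref{eq:V_bounds} and the decrease \eqref{eq:exp_rate_optimal_v} separately, then chain them into an exponential estimate along the closed loop. For the bounds I would work in the condensed form $V(x)=\min_{\boldsymbol{\nu}\in\mathcal{N}}\|(x,\boldsymbol{\nu})\|_M^2$. The upper bound is immediate: since $0\in\mathcal{U}$ by Assumption~\ref{assum:well_posed}.\ref{assum:convex_u}, the input $\boldsymbol{\nu}=\boldsymbol{0}$ is feasible, so $V(x)\le\|(x,\boldsymbol{0})\|_M^2=\|x\|_W^2$. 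For the lower bound I would drop the constraint $\boldsymbol{\nu}\in\mathcal{N}$ and minimize over all of $\mathbb{R}^{Nm}$; the unconstrained minimizer produces the Schur complement $W-G^\top H^{-1}G$, which equals $P$ because the Riccati recursion with terminal weight $P$ sits at its DARE fixed point. Hence $V(x)\ge\|x\|_{W-G^\top H^{-1}G}^2=\|x\|_P^2$.

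For the decrease I would use the standard shifted-input construction. Let $\mu^{\star}(x)=[\nu_0^{\star\top}\dots\nu_{N-1}^{\star\top}]^\top$ with predicted states $\xi_0=x,\dots,\xi_N$, so that $f(x)=\xi_1$. At the successor $f(x)$ I would test the candidate $\tilde{\boldsymbol{\nu}}=[\nu_1^{\star\top}\dots\nu_{N-1}^{\star\top}\;(-K\xi_N)^\top]^\top$, the tail of the optimal sequence with the terminal LQR control appended. Evaluating $J_N(\xi_1,\tilde{\boldsymbol{\nu}})$ and subtracting $V(x)$, the common stage terms cancel and the terminal terms combine via the DARE identity $P=Q+K^\top R K+(A-BK)^\top P(A-BK)$ into $\|\xi_N\|_Q^2+\|K\xi_N\|_R^2+\|(A-BK)\xi_N\|_P^2=\|\xi_N\|_P^2$. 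Since $\tilde{\boldsymbol{\nu}}$ is feasible at $f(x)$, optimality of $V$ gives $V(f(x))\le J_N(\xi_1,\tilde{\boldsymbol{\nu}})=V(x)-\|x\|_Q^2-\|\nu_0^{\star}\|_R^2\le V(x)-\|x\|_Q^2$. To turn this additive decrease into the multiplicative rate I would use the weighted-eigenvalue inequality together with the upper bound just proved: $\|x\|_Q^2\ge\lambda_W^{-}(Q)\|x\|_W^2\ge\lambda_W^{-}(Q)V(x)$, so $V(f(x))\le(1-\lambda_W^{-}(Q))V(x)=\beta^2V(x)$. That $\beta\in(0,1)$ follows since $Q\succ0$ forces $\lambda_W^{-}(Q)>0$, while $\|x\|_W^2$ equals the zero-input cost $\sum_{i=0}^{N-1}\|A^ix\|_Q^2+\|A^Nx\|_P^2$, which strictly exceeds its first term $\|x\|_Q^2$, forcing $\lambda_W^{-}(Q)<1$.

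The delicate step is feasibility of the appended move $-K\xi_N$, which lies in $\mathcal{U}$ only if $\xi_N\in\Omega$. Here I would invoke the ROA construction of \cite{limon2006stability,leung2021computable} underlying the definitions of $r_N$, $d$ and $c$: for $x\in\Gamma_N$ the $N$-step predicted terminal state is guaranteed to enter $\Omega$, so the candidate is admissible and the decrease holds throughout $\Gamma_N$. The same decrease renders $\Gamma_N$ positively invariant, since $V(f(x))\le\beta^2V(x)\le V(x)\le r_N^2$ keeps $\psi(f(x))\le r_N$. I expect this to require the most care, as it is where the terminal-set geometry and the choice of $c$ enter; the preceding telescoping and eigenvalue manipulations are routine.

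Finally, for exponential stability I would iterate the decrease along the invariant set $\Gamma_N$ to obtain $V(x_k^{\star})\le\beta^{2k}V(x_0)$, and then sandwich using \eqref{eq:V_bounds}: $\lambda^{-}(P)\|x_k^{\star}\|^2\le\|x_k^{\star}\|_P^2\le V(x_k^{\star})\le\beta^{2k}V(x_0)\le\beta^{2k}\|W\|\,\|x_0\|^2$. Taking square roots yields $\|x_k^{\star}\|\le\gamma\|x_0\|\beta^k$ with $\gamma=\sqrt{\|W\|/\lambda^{-}(P)}$, which is exactly local exponential stability with decay rate $\beta$ on the region $\Gamma_N$.
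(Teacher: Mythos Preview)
Your argument is correct and follows the same Lyapunov strategy as the paper: sandwich bounds on $V$, the standard one-step decrease $V(f(x))\le V(x)-\|x\|_Q^2$ on $\Gamma_N$, conversion to a multiplicative rate via $\|x\|_Q^2\ge\lambda_W^{-}(Q)\,V(x)$, and then chaining to the exponential state bound. The only substantive difference is in how the upper bound $V(x)\le\|x\|_W^2$ is obtained. You plug in the feasible candidate $\boldsymbol{\nu}=\boldsymbol{0}$, which is the most direct route and uses only Assumption~\ref{assum:well_posed}.\ref{assum:convex_u}. The paper instead expands $V(x)=\|x\|_W^2+2\langle\mu^{\star}(x),Gx\rangle+\|\mu^{\star}(x)\|_H^2$ and invokes the monotonicity inequality of Lemma~\ref{lem:lispchitz} (with $y=0$) to conclude $2\langle\mu^{\star}(x),Gx\rangle\le-2\|\mu^{\star}(x)\|_H^2$, yielding the sharper intermediate estimate $V(x)\le\|x\|_W^2-\|\mu^{\star}(x)\|_H^2$. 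That refinement is not exploited elsewhere, so your simpler argument loses nothing for the lemma itself; the paper's version, however, ties the bound to the same operator-theoretic machinery used in the suboptimal analysis. Your explicit handling of the terminal-feasibility step ($-K\xi_N\in\mathcal{U}$ for $x\in\Gamma_N$) is also more careful than the paper, which defers this to the cited literature.
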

\begin{proof}
The value function of the constrained MPC problem will necessarily attain a cost no less than the optimal infinite horizon unconstrained problem, i.e, for all $x\in \Gamma_N$ $\|x\|_P^2 \leq V(x)$.
Using the condensed form of \eqref{eq:MPC_POCP}
\begin{align*}
\label{eq:V_upper_bound}
    V(x) &= \|x\|_W^2 + 2\langle\mu^{\star}(x),Gx\rangle + \|\mu^{\star}(x)\|_H^2\\
    &\leq \|x\|_W^2 - \|\mu^{\star}(x)\|_H^2 \leq \|x\|_W^2,
\end{align*}
where the inequality follows from  Lemma \ref{lem:lispchitz}. For all $x \in \Gamma_N$ \begin{equation*}
     V\left(f(x)\right) - V(x)\leq -\|x\|_Q^2 \leq -\lambda_W^{-}(Q)\|x\|^2_W,
\end{equation*}
given the terminal cost matrix $P$ and quadratic stage costs \cite{mayne2000constrained}. Using the upper bound in \eqref{eq:V_bounds},
\begin{equation*}
    V(f(x)) \leq (1-\lambda_W^{-}(Q))V(x) = \beta^2V(x),
\end{equation*}
where $\lambda_W^{-}(Q)\in (0,1)$, as $W\succ Q$ from the structure of the matrices in Appendix \ref{sec:matrices_definition}. The function $V$ is then a Lyapunov function with a fixed rate of decay. Hence, by Lyapunov's direct method, $f(x)$ is locally exponentially stable \cite{haddad2011nonlinear}. In particular, for all $x_0 \in \Gamma_N$, given \eqref{eq:V_bounds}
\begin{align*}
    \|x_k^{\star}\|_P &\leq \psi(x_k^{\star})\leq \beta^k\psi(x_0) \Longrightarrow \\
    \|x_k^{\star}\| &\leq \frac{\beta^k\psi(x_0)}{\sqrt{\lambda^{-}(P)}}\leq\frac{\|x_0\|_W}{\sqrt{\lambda^{-}(P)}} \cdot \beta^k,
\end{align*}
showing the desired exponential decay rate.
\end{proof}

\subsection{Suboptimal MPC}
\label{sec:suboptimal_mpc}
To analyze the suboptimal closed-loop system \eqref{eq:suboptimal_perturbed}, we look at the combined system-optimizer dynamics \eqref{eq:suboptimal_combined}. The following theorem characterizes the linear convergence rate of PGM.

\begin{theorem}\cite[Theorem~3.1]{taylor2018exact} 
\label{the:pgm_contraction}
For any $x \in \mathbb{R}^n$, $\boldsymbol{\nu} \in \mathbb{R}^{N m}$, $\ell \in \mathbb{N}$, and for $\alpha = \frac{1}{\lambda^{+}(H)+\lambda^{-}(H)}$

$$
\left\|\mathcal{T}^{\ell}(x,\boldsymbol{\nu})-\mu^{\star}(x)\right\| \leq \eta^{\ell}\|\boldsymbol{\nu}-\mu^{\star}(x)\|,
$$
where $\eta=(\lambda^{+}(H) -\lambda^{-}(H)) /(\lambda^{+}(H)+\lambda^{-}(H))$.
\end{theorem}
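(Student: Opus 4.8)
The plan is to recognize that, for a fixed parameter $x$, the operator $\mathcal{T}(x,\cdot)$ is exactly the projected-gradient operator for minimizing the strongly convex quadratic $\boldsymbol{\nu}\mapsto J_N(x,\boldsymbol{\nu})$ over the closed convex set $\mathcal{N}=\mathcal{U}^N$, so the statement reduces to the classical per-step contraction estimate for projected gradient descent on a smooth, strongly convex objective. From the condensed form $J_N(x,\boldsymbol{\nu})=\|x\|_W^2+2\langle\boldsymbol{\nu},Gx\rangle+\|\boldsymbol{\nu}\|_H^2$, the Hessian in $\boldsymbol{\nu}$ is the constant matrix $2H$, whose spectrum lies in $[2\lambda^-(H),2\lambda^+(H)]$ with $H\succ0$ (from $R\succ0$ and the matrix structure). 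The crucial simplification relative to the general theory is that, because the objective is quadratic, the inner gradient step is \emph{affine}, so the contraction can be read off exactly from a single spectral-norm computation rather than from a co-coercivity argument.

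First I would establish that $\mu^{\star}(x)$ is a fixed point of $\mathcal{T}(x,\cdot)$. Since $\mu^{\star}(x)$ minimizes $J_N(x,\cdot)$ over the nonempty closed convex set $\mathcal{N}$, the first-order optimality condition is the variational inequality $\langle\nabla_{\boldsymbol{\nu}}J_N(x,\mu^{\star}(x)),\boldsymbol{\nu}-\mu^{\star}(x)\rangle\ge0$ for all $\boldsymbol{\nu}\in\mathcal{N}$. By the obtuse-angle characterization of the Euclidean projection, this is equivalent to $\mu^{\star}(x)=\Pi_{\mathcal{N}}[\mu^{\star}(x)-\alpha\nabla_{\boldsymbol{\nu}}J_N(x,\mu^{\star}(x))]$ for any $\alpha>0$, independently of scaling; hence $\mathcal{T}(x,\mu^{\star}(x))=\mu^{\star}(x)$.

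Next I would bound a single iteration. Writing the inner gradient step as $\Phi(\boldsymbol{\nu}):=\boldsymbol{\nu}-\alpha\nabla_{\boldsymbol{\nu}}J_N(x,\boldsymbol{\nu})$, the quadratic structure gives the exact identity $\Phi(\boldsymbol{\nu})-\Phi(\boldsymbol{\nu}')=(I-2\alpha H)(\boldsymbol{\nu}-\boldsymbol{\nu}')$. Combining the fixed-point property with the nonexpansiveness of $\Pi_{\mathcal{N}}$ (projection onto a convex set) yields
\begin{equation*}
\|\mathcal{T}(x,\boldsymbol{\nu})-\mu^{\star}(x)\|=\|\Pi_{\mathcal{N}}[\Phi(\boldsymbol{\nu})]-\Pi_{\mathcal{N}}[\Phi(\mu^{\star}(x))]\|\le\|I-2\alpha H\|\,\|\boldsymbol{\nu}-\mu^{\star}(x)\|.
\end{equation*}
The one step where the particular value of $\alpha$ is used — and the part I expect to be the main obstacle, in the sense that it is the only nontrivial computation — is verifying that $\|I-2\alpha H\|=\eta$. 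The eigenvalues of the symmetric matrix $I-2\alpha H$ are $1-2\alpha\lambda$ for $\lambda\in\mathrm{spec}(H)\subseteq[\lambda^-(H),\lambda^+(H)]$; substituting $\alpha=1/(\lambda^+(H)+\lambda^-(H))$, the extreme eigenvalues become $(\lambda^+(H)-\lambda^-(H))/(\lambda^+(H)+\lambda^-(H))$ and $(\lambda^-(H)-\lambda^+(H))/(\lambda^+(H)+\lambda^-(H))$, a symmetric pair of common magnitude $\eta$. This balancing of the two endpoints is precisely why this $\alpha$ is chosen: it equalizes the positive and negative extremes so that the spectral norm — the maximal absolute eigenvalue of a symmetric matrix — equals exactly $\eta$ rather than a larger value.

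Finally, I would iterate the one-step bound. Since $\mu^{\star}(x)$ is a fixed point of $\mathcal{T}(x,\cdot)$, applying the estimate successively with $\boldsymbol{\nu}$ replaced by $\mathcal{T}^{j}(x,\boldsymbol{\nu})$ for $j=0,\dots,\ell-1$ gives $\|\mathcal{T}^{\ell}(x,\boldsymbol{\nu})-\mu^{\star}(x)\|\le\eta\,\|\mathcal{T}^{\ell-1}(x,\boldsymbol{\nu})-\mu^{\star}(x)\|\le\cdots\le\eta^{\ell}\|\boldsymbol{\nu}-\mu^{\star}(x)\|$, which is the claimed bound.
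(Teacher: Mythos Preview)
Your argument is correct. The paper does not supply its own proof of this statement; it simply cites \cite[Theorem~3.1]{taylor2018exact} and uses the result as a black box. That reference establishes the contraction rate for projected gradient on general smooth, strongly convex objectives via a performance-estimation (PEP) analysis. Your route is the natural specialization to the quadratic case: because $J_N(x,\cdot)$ has constant Hessian $2H$, the inner gradient map is affine with Jacobian $I-2\alpha H$, so nonexpansiveness of $\Pi_{\mathcal{N}}$ and the fixed-point property of $\mu^{\star}(x)$ reduce the per-step bound to computing $\|I-2\alpha H\|$, which equals $\eta$ at the stated $\alpha$. This elementary spectral argument is entirely adequate here and is in fact more transparent than invoking the general theorem; the only thing the cited result buys is coverage of non-quadratic objectives, which is not needed in this paper.
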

\begin{remark}
We take the initial $\boldsymbol{\nu} = \boldsymbol{0}$.
\end{remark}
The stability of \eqref{eq:suboptimal_combined} is assessed by analysing the evolution of the suboptimality disturbance $d(z_k,x_k)$ and $\psi(x_k)$ over time.
\begin{lemma} 
\label{lem:auxiliary_dynamics}
    Given the dynamics \eqref{eq:suboptimal_combined}, for all $k\geq0$, $x_k \in \Gamma_N$ and $z_k\in \mathcal{N}$, the following holds
    \begin{align*}
         \psi(x_{k+1}) &\leq \beta \psi(x_k) + \sigma \|d(z_k,x_k)\|, \\
         \|d(z_{k+1},x_{k+1})\| &\leq \eta^{\ell_k}\kappa\psi(x_k) +\eta^{\ell_k}\omega\|d(z_k,x_k)\|,
    \end{align*}
    where  $\omega = 1 + \|H^{-\frac{1}{2}}\|\|H^{-\frac{1}{2}}G\overline{B}\|$,  $\sigma = \|W^{\frac{1}{2}}\overline{B}\|$, and
    \begin{align*}
    \kappa &=\|H^{-\frac{1}{2}}\|\|H^{-\frac{1}{2}}G(A-I)P^{-\frac{1}{2}}\|\\
    &+\|H^{-\frac{1}{2}}\|\sqrt{\lambda_H^+(G\overline{B})(\lambda_P^+(W)-1)}.
    \end{align*}
\end{lemma}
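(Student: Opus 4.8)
The plan is to establish the two recursions separately, using for the first the Lyapunov decrease of the optimal dynamics from Lemma~\ref{lem:optimal_stability} together with a triangle-type inequality for $\psi$, and for the second the linear contraction of the projected gradient map from Theorem~\ref{the:pgm_contraction} together with the regularity of $\mu^{\star}$ from Lemma~\ref{lem:lispchitz}. Throughout I exploit the perturbed representation \eqref{eq:suboptimal_perturbed}, $x_{k+1}=f(x_k)+\overline{B}d_k(x_k)$, the condensed form $V(x)=\min_{\boldsymbol{\nu}\in\mathcal{N}}\|(x,\boldsymbol{\nu})\|_M^2$, and the bounds \eqref{eq:V_bounds} on $V=\psi^2$.

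For the first recursion, the key observation is that $\psi$ obeys $\psi(a+b)\le\psi(a)+\|b\|_W$ for any $a,b\in\mathbb{R}^n$: the minimizer $\mu^{\star}(a)\in\mathcal{N}$ is feasible for the condensed problem at $a+b$, so $\psi(a+b)\le\|(a+b,\mu^{\star}(a))\|_M\le\|(a,\mu^{\star}(a))\|_M+\|(b,0)\|_M=\psi(a)+\|b\|_W$, where I used the triangle inequality for the $M$-norm and $\|(b,0)\|_M^2=\|b\|_W^2$. Applying this with $a=f(x_k)$ and $b=\overline{B}d_k(x_k)$, then invoking $\psi(f(x_k))\le\beta\psi(x_k)$ from \eqref{eq:exp_rate_optimal_v} and $\|\overline{B}d_k\|_W\le\|W^{1/2}\overline{B}\|\,\|d_k\|=\sigma\|d_k\|$, gives the first claim.

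For the second recursion, write $d_{k+1}=z_{k+1}-\mu^{\star}(x_{k+1})$ with $z_{k+1}=\mathcal{T}^{\ell_{k+1}}(x_{k+1},z_k)$; Theorem~\ref{the:pgm_contraction} contracts the distance to the optimizer by one factor of $\eta$ per iteration, so $\|d_{k+1}\|\le\eta^{\ell_{k+1}}\|z_k-\mu^{\star}(x_{k+1})\|$. I then split $z_k-\mu^{\star}(x_{k+1})=d_k+(\mu^{\star}(x_k)-\mu^{\star}(x_{k+1}))$ and bound the second piece with the sharp estimate of Lemma~\ref{lem:lispchitz}, $\|\mu^{\star}(x_k)-\mu^{\star}(x_{k+1})\|\le\|H^{-1/2}\|\,\|H^{-1/2}G(x_{k+1}-x_k)\|$. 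The increment is expanded as $x_{k+1}-x_k=(A-I)x_k+\overline{B}\mu^{\star}(x_k)+\overline{B}d_k$, and the three terms are handled by the triangle inequality: the $(A-I)x_k$ term gives $\|H^{-1/2}\|\,\|H^{-1/2}G(A-I)P^{-1/2}\|\,\|x_k\|_P$, hence the first summand of $\kappa$ after $\|x_k\|_P\le\psi(x_k)$; the $\overline{B}d_k$ term contributes $\|H^{-1/2}\|\,\|H^{-1/2}G\overline{B}\|\,\|d_k\|$, which combines with the direct $\|d_k\|$ to form $\omega\|d_k\|$.

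The main obstacle is the $\overline{B}\mu^{\star}(x_k)$ term, which must become the second summand of $\kappa$. Here I bound $\|H^{-1/2}G\overline{B}\mu^{\star}(x_k)\|\le\sqrt{\lambda_H^{+}(G\overline{B})}\,\|\mu^{\star}(x_k)\|_H$ and then relate the optimal input energy to the value function: the identity $V(x)=\|x\|_W^2+2\langle\mu^{\star}(x),Gx\rangle+\|\mu^{\star}(x)\|_H^2$ together with the inner-product inequality of Lemma~\ref{lem:lispchitz} (at $y=0$) yields $\|\mu^{\star}(x)\|_H^2\le\|x\|_W^2-V(x)$, and combining $\|x\|_W^2\le\lambda_P^{+}(W)\|x\|_P^2\le\lambda_P^{+}(W)V(x)$ gives $\|\mu^{\star}(x)\|_H^2\le(\lambda_P^{+}(W)-1)V(x)=(\lambda_P^{+}(W)-1)\psi(x)^2$. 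Substituting produces exactly $\|H^{-1/2}\|\sqrt{\lambda_H^{+}(G\overline{B})(\lambda_P^{+}(W)-1)}\,\psi(x_k)$, completing $\kappa$, and factoring out the contraction factor delivers the second recursion. The remaining care is in consistently tracking the weighted $M$-, $H$-, and $P$-norms and in noting that $\lambda_P^{+}(W)\ge1$, which follows from $\|x\|_P^2\le V(x)\le\|x\|_W^2$.
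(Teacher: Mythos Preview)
Your argument is correct and is precisely the standard route that the paper defers to: the paper does not give its own proof but cites \cite[Lemma~7]{leung2021computable} and states that the extension to time-varying $\ell_k$ is immediate. Your reconstruction---the $M$-norm triangle inequality for $\psi$, the nominal Lyapunov decrease \eqref{eq:exp_rate_optimal_v}, the PGM contraction of Theorem~\ref{the:pgm_contraction}, the Lipschitz bound of Lemma~\ref{lem:lispchitz}, and the key estimate $\|\mu^{\star}(x)\|_H^2\le(\lambda_P^{+}(W)-1)\psi(x)^2$ obtained from the condensed identity and the inner-product inequality---matches that argument step for step.

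One small observation worth flagging: your derivation naturally produces $\eta^{\ell_{k+1}}$ as the contraction factor in the second recursion, since $z_{k+1}=\mathcal{T}^{\ell_{k+1}}(x_{k+1},z_k)$, whereas the lemma as stated carries $\eta^{\ell_k}$. This is an index offset in the statement rather than a flaw in your proof; it is immaterial downstream because every subsequent result assumes $\ell_k>\ell^{\star}$ for all $k$.
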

The lemma is a slightly modified version of \cite[Lemma~7]{leung2021computable} with time-varying $\ell_k$-s. The proof follows directly from the one found in \cite{leung2021computable}.
The following shows the existence of a Lyapunov function for the augmented state $s_k$.

\begin{theorem} 
\label{the:stability_theorem}
If $\ell_k>\ell^{\star}$ for all $k\geq 0$, where 
\begin{equation*}
    \ell^{\star} = \frac{\log(1-\beta) - \log(\sigma\kappa + \omega(1-\beta))}{\log(\eta)},
\end{equation*}
then  the system-optimizer dynamics \eqref{eq:suboptimal_combined} are asymptotically stable in the forward invariant ROA estimate
\begin{align*}
    \Sigma_N = \biggl\{ (x, z)\! \in\! \Gamma_N\! \times\! \mathcal{N} \mid~& \psi(x)\! \leq\! r_N, \biggr. \\ 
        &\biggl. \|z-\mu^{\star}(x)\| \leq \frac{(1-\beta)r_{N}}{\sigma} \biggl\}.
\end{align*} 
The function
\begin{equation}
    \mathcal{L}(x,z) := \psi(x) + \tau \|z-\mu^{\star}(x)\|
    \label{eq:lyapunov_function}
\end{equation}
is a local Lyapunov function for \eqref{eq:suboptimal_combined}, defined in $\Sigma_N$, given $\tau$ satisfies the following inequalities for all $k\geq 0$
\begin{equation}
\begin{split}
\label{eq:tau_equation}
    (\beta-1) + \tau\eta^{\ell_k}\kappa & <0,\\
    \sigma + (\eta^{\ell_k}\omega-1)\tau & <0,\\
    \tau > 0.
\end{split}
\end{equation}
\end{theorem}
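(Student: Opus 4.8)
The plan is to verify that $\mathcal{L}$ in \eqref{eq:lyapunov_function} is a genuine Lyapunov function for the augmented state $s_k$ by collapsing the full system-optimizer dynamics onto the two scalar recursions of Lemma~\ref{lem:auxiliary_dynamics}, and then to show that the feasibility of \eqref{eq:tau_equation} is equivalent to the threshold $\ell_k>\ell^{\star}$. Writing $\psi_k:=\psi(x_k)$ and $\delta_k:=\|z_k-\mu^{\star}(x_k)\|=\|d_k(x_k)\|$, I would first record that $\mathcal{L}$ is positive definite on the augmented space: since $\psi(x)=\sqrt{V(x)}\geq\|x\|_P$ and $\tau>0$, $\mathcal{L}(x,z)=0$ forces $x=0$ and hence $z=\mu^{\star}(0)=0$, while the Lipschitz bound of Lemma~\ref{lem:lispchitz} together with the upper bound $\psi(x)\leq\|x\|_W$ in \eqref{eq:V_bounds} sandwiches $\mathcal{L}$ between class-$\mathcal{K}$ functions of $\|s\|$.

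Next I would compute the one-step difference. Substituting the two inequalities of Lemma~\ref{lem:auxiliary_dynamics} into $\mathcal{L}(s_{k+1})=\psi_{k+1}+\tau\delta_{k+1}$ gives
\begin{equation*}
    \mathcal{L}(s_{k+1}) \leq (\beta + \tau\eta^{\ell_k}\kappa)\psi_k + (\sigma + \tau\eta^{\ell_k}\omega)\delta_k,
\end{equation*}
so that $\mathcal{L}(s_{k+1})-\mathcal{L}(s_k)\leq \big((\beta-1)+\tau\eta^{\ell_k}\kappa\big)\psi_k+\big(\sigma+(\eta^{\ell_k}\omega-1)\tau\big)\delta_k$. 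Because $\psi_k,\delta_k\geq0$ and vanish simultaneously only at $s_k=0$, the two bracketed coefficients being strictly negative — which is precisely \eqref{eq:tau_equation} — forces a strict decrease $\mathcal{L}(s_{k+1})<\mathcal{L}(s_k)$ away from the origin. Setting $\bar c:=\max\{(\beta-1)+\tau\eta^{\ell_k}\kappa,\;\sigma+(\eta^{\ell_k}\omega-1)\tau\}<0$ yields the decrescent bound needed for Lyapunov's direct method.

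The crux is establishing that a single $\tau$ satisfying \eqref{eq:tau_equation} for all $k$ exists exactly when $\ell_k>\ell^{\star}$. The first inequality rewrites as $\tau<(1-\beta)/(\eta^{\ell_k}\kappa)$ and, once one checks $\eta^{\ell_k}\omega<1$, the second as $\tau>\sigma/(1-\eta^{\ell_k}\omega)$; a feasible $\tau$ for a given $k$ therefore exists iff $\sigma/(1-\eta^{\ell_k}\omega)<(1-\beta)/(\eta^{\ell_k}\kappa)$. Clearing denominators reduces this to $\eta^{\ell_k}\big(\sigma\kappa+\omega(1-\beta)\big)<1-\beta$, and taking logarithms — noting $\log\eta<0$, so the inequality reverses — returns exactly $\ell_k>\ell^{\star}$. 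Since $\eta\in(0,1)$, a larger $\ell_k$ shrinks $\eta^{\ell_k}$ and widens the admissible interval $\big(\sigma/(1-\eta^{\ell_k}\omega),\,(1-\beta)/(\eta^{\ell_k}\kappa)\big)$, so its intersection over $k$ is controlled by the smallest $\ell_k$; hence $\ell_k>\ell^{\star}$ for all $k$ guarantees a common $\tau$. The auxiliary condition $\eta^{\ell_k}\omega<1$ is automatic, as $\ell_k>\ell^{\star}$ already gives $\eta^{\ell_k}<(1-\beta)/(\sigma\kappa+\omega(1-\beta))\leq1/\omega$.

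Finally I would establish forward invariance of $\Sigma_N$, which is what keeps the recursions of Lemma~\ref{lem:auxiliary_dynamics} valid along the trajectory. If $(x_k,z_k)\in\Sigma_N$, then $\psi(x_{k+1})\leq\beta r_N+\sigma\cdot\frac{(1-\beta)r_N}{\sigma}=r_N$, and $\delta_{k+1}\leq\eta^{\ell_k}r_N\frac{\sigma\kappa+\omega(1-\beta)}{\sigma}\leq\frac{(1-\beta)r_N}{\sigma}$, the last step being once more the threshold inequality $\eta^{\ell_k}(\sigma\kappa+\omega(1-\beta))\leq1-\beta$; thus $(x_{k+1},z_{k+1})\in\Sigma_N$. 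Combining positive definiteness, the strict decrease, and invariance, Lyapunov's direct method delivers asymptotic stability on $\Sigma_N$. I expect the main obstacle to be the $\tau$-feasibility argument: getting the direction of the logarithmic inequality right and, more subtly, arguing that one \emph{fixed} $\tau$ can be chosen uniformly over the time-varying $\ell_k$ rather than a $k$-dependent one.
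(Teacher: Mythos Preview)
Your proposal is correct and follows precisely the route the paper relies on: the paper does not give its own proof here but defers to \cite{zanelli2020lyapunov,leung2021computable}, and the surrounding material (Lemma~\ref{lem:auxiliary_dynamics} for the scalar recursions, Theorem~\ref{the:lyapunov_exponential_time_varying} for the decrease estimate) makes clear that exactly your decomposition $\mathcal{L}(s_{k+1})\leq(\beta+\tau\eta^{\ell_k}\kappa)\psi_k+(\sigma+\tau\eta^{\ell_k}\omega)\delta_k$ is intended. Your explicit derivation of the $\ell^{\star}$ threshold from the $\tau$-feasibility interval, the monotonicity argument giving a common $\tau$ governed by $\min_k\ell_k$, and the forward-invariance check for $\Sigma_N$ all fill in details the paper leaves to its references, and they are carried out correctly.
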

For time-invariant $\ell$, the proof for the general case can be found in \cite{zanelli2020lyapunov} and for the LQMPC in \cite{leung2021computable}; the extension to time varying $\ell_k$-s follows directly.
\begin{remark}
We note that the terms $\beta, \omega, \kappa, \sigma$ and $\ell^{\star}$ depend on $N$ implicitly. We will make this dependence explicit, when unclear from the context.
\end{remark}

The following result provides a guaranteed rate of decay for the Lyapunov function \eqref{eq:lyapunov_function}, hence proving exponential stability of the combined dynamics \eqref{eq:suboptimal_combined}.

\begin{theorem}
  Given the dynamics \eqref{eq:suboptimal_combined}, for  all $s_0\in \Sigma_N$,  $k\geq0$ and  $\ell_k> \ell^{\star}$
\begin{subequations}
        \label{eq:exponentatial_stability_varying_l}
\begin{align*}
    \mathcal{L}(x_{k+1},z_{k+1}) &\leq \varepsilon_k \mathcal{L}(x_k,z_k),\\
    \mathcal{L}(x_{0},z_{0}) &\leq h_0 \cdot \|x_0\|_W,
    \end{align*}
\end{subequations}
with $\varepsilon_k := \max\{\beta+\tau\kappa\eta^{\ell_k}, \frac{\sigma+\tau\eta^{\ell_k}\omega}{\tau}\} \in (0,1)$, $\varepsilon_{-1}:=1$ and $h_0 = 1+\tau\eta^{\ell_0}L\|W^{-\frac{1}{2}}\|$.
\label{the:lyapunov_exponential_time_varying}
\end{theorem}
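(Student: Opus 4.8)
The plan is to treat $\mathcal{L}$ as a weighted sum of the two scalar quantities $\psi(x_k)$ and $\|d_k(x_k)\| = \|z_k - \mu^{\star}(x_k)\|$ and to propagate the coupled recursions of Lemma~\ref{lem:auxiliary_dynamics} through this sum. Since \eqref{eq:lyapunov_function} reads $\mathcal{L}(x_k,z_k) = \psi(x_k) + \tau\|d_k(x_k)\|$, the one-step bound should follow by substituting the two inequalities of Lemma~\ref{lem:auxiliary_dynamics} and collecting the coefficients of $\psi(x_k)$ and $\|d_k(x_k)\|$ separately. Throughout, I would fix $\tau>0$ to be any value satisfying \eqref{eq:tau_equation}, whose existence for all $k$ is guaranteed by Theorem~\ref{the:stability_theorem} under the hypothesis $\ell_k>\ell^{\star}$.

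First I would establish the contraction inequality. Summing $\psi(x_{k+1})$ and $\tau\|d_{k+1}(x_{k+1})\|$ and applying Lemma~\ref{lem:auxiliary_dynamics} gives
\begin{align*}
\mathcal{L}(x_{k+1},z_{k+1}) &\leq (\beta + \tau\eta^{\ell_k}\kappa)\psi(x_k) \\
&\quad + (\sigma + \tau\eta^{\ell_k}\omega)\|d_k(x_k)\|.
\end{align*}
The target is to dominate the right-hand side by $\varepsilon_k\bigl(\psi(x_k) + \tau\|d_k(x_k)\|\bigr)$, which holds precisely when $\beta + \tau\eta^{\ell_k}\kappa \leq \varepsilon_k$ and $(\sigma + \tau\eta^{\ell_k}\omega)/\tau \leq \varepsilon_k$ hold simultaneously; choosing $\varepsilon_k$ to be the maximum of these two coefficient ratios makes both inequalities true by construction. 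The membership $\varepsilon_k\in(0,1)$ is then read off directly from \eqref{eq:tau_equation}: its first line is exactly $\beta + \tau\eta^{\ell_k}\kappa < 1$, its second line rearranges to $(\sigma + \tau\eta^{\ell_k}\omega)/\tau < 1$, and positivity of all constituent constants ensures both terms are strictly positive.

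Next I would derive the initial bound. Using the initialization $z_{-1} = \boldsymbol{0}$ (the Remark following Theorem~\ref{the:pgm_contraction}), the PGM contraction of Theorem~\ref{the:pgm_contraction} with $\boldsymbol{\nu} = \boldsymbol{0}$ gives $\|z_0 - \mu^{\star}(x_0)\| \leq \eta^{\ell_0}\|\mu^{\star}(x_0)\|$. Because $\boldsymbol{0}\in\mathcal{U}$ forces $\mu^{\star}(\boldsymbol{0}) = \boldsymbol{0}$, the Lipschitz estimate of Lemma~\ref{lem:lispchitz} yields $\|\mu^{\star}(x_0)\| \leq L\|x_0\|$, and the conversion $\|x_0\| \leq \|W^{-\frac{1}{2}}\|\,\|x_0\|_W$ combined with the upper bound $\psi(x_0)\leq\|x_0\|_W$ from \eqref{eq:V_bounds} gives
\begin{equation*}
\mathcal{L}(x_0,z_0) \leq \|x_0\|_W + \tau\eta^{\ell_0}L\|W^{-\frac{1}{2}}\|\,\|x_0\|_W = h_0\|x_0\|_W .
\end{equation*}

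I expect the algebra itself to be routine; the only genuine subtlety is that a single $\tau$ must satisfy \eqref{eq:tau_equation} \emph{uniformly} in $k$. This is not immediate from per-step feasibility, but as $\ell_k$ grows the admissible interval for $\tau$ only widens, since the upper limit $(1-\beta)/(\eta^{\ell_k}\kappa)$ increases while the lower limit $\sigma/(1-\eta^{\ell_k}\omega)$ decreases; hence the binding case is the smallest budget $\min_k \ell_k > \ell^{\star}$, where feasibility is already asserted by Theorem~\ref{the:stability_theorem}. A secondary point to state explicitly is that $x_k\in\Gamma_N$ and $z_k\in\mathcal{N}$ must hold along the whole trajectory for Lemma~\ref{lem:auxiliary_dynamics} to apply at each $k$, which is guaranteed by the forward invariance of $\Sigma_N$ from Theorem~\ref{the:stability_theorem}.
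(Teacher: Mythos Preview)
Your proposal is correct and follows essentially the same route as the paper's proof: apply Lemma~\ref{lem:auxiliary_dynamics} to $\mathcal{L}(x_{k+1},z_{k+1})$, bound by the maximum of the two coefficient ratios to recover $\varepsilon_k\mathcal{L}(x_k,z_k)$, and for the initial bound combine $\psi(x_0)\leq\|x_0\|_W$, Theorem~\ref{the:pgm_contraction} with $\boldsymbol{\nu}=\boldsymbol{0}$, and Lemma~\ref{lem:lispchitz}. Your explicit remarks on the uniform choice of $\tau$ and on the forward invariance of $\Sigma_N$ needed to invoke Lemma~\ref{lem:auxiliary_dynamics} at every step are welcome clarifications that the paper leaves implicit.
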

\begin{proof}
Recalling \eqref{eq:lyapunov_function}
\begin{align*}
    \mathcal{L}(x_{k+1},&z_{k+1}) = \psi(x_{k+1})\! +\! \tau\|d(z_{k+1}, x_{k+1})\| \\
    &\!\leq\! \underbrace{(\beta \!+\! \tau\eta^{\ell_k}\kappa)}_{\overline{\beta}_k}\psi(x_k)\! +\! \tau\underbrace{\left(\frac{\sigma +\tau\eta^{\ell_k}\omega}{\tau}\right)}_{\overline{\varepsilon}_k}\|d(z_k,x_k)\|\\
    &\leq \underbrace{\max\{\overline{\beta}_k,\overline{\varepsilon}_k\}}_{:=\varepsilon_k}\left(\psi(x_k)+\tau\|d(z_k,x_k)\|\right)\\
    &=\max\{\overline{\beta}_k,\overline{\varepsilon}_k\}\mathcal{L}(x_k,z_k),
\end{align*}
where the equalities follow from the definition of the Lyapunov function and the first inequality follows from Lemma \ref{lem:auxiliary_dynamics}. Note that ${\varepsilon}_k \in (0,1)$ as $\tau$ satisfies \eqref{eq:tau_equation} for all $k\geq 0$. 
For the second bound, consider
\begin{align*}
    \mathcal{L}(x_0,z_0) &= \psi(x_0) + \tau\|z_0 - \mu^{\star}(x_0)\|\\
    &\leq \|x_0\|_W + \tau\|\mathcal{T}^{\ell_0}(x_0,\boldsymbol{0}) - \mu^{\star}(x_0)\|\\
    &\leq \|x_0\|_W +\tau\eta^{\ell_0}\|\mu^\star(x_0)\|\\
     &\leq \|x_0\|_W\left(1+\tau\eta^{\ell_0}L\|W^{-\frac{1}{2}}\|\right),
\end{align*}
where the first inequality follows from Lemma \ref{lem:optimal_stability} and \eqref{eq:suboptimal_input}, the second from Theorem \ref{the:pgm_contraction} and the third from  Lemma \ref{lem:lispchitz}.
\end{proof}

The following corollary follows directly from the above theorem for a fixed $\ell>\ell^{\star}$.
\begin{corollary}
  Given the dynamics \eqref{eq:suboptimal_combined}, for  all $\ell> \ell^{\star}$, $s_0\in \Sigma_N$ and  $k\geq0$
\begin{equation*}
    \mathcal{L}(x_{k+1},z_{k+1})\leq \varepsilon \mathcal{L}(x_k,z_k),
\end{equation*}
where $\mathcal{L}$ is defined in \eqref{eq:lyapunov_function} and $\varepsilon := \max\{\beta+\tau\kappa\eta^{\ell}, \frac{\sigma+\tau\eta^{\ell}\omega}{\tau}\} \in (0,1)$.
\label{the:lyapunov_exponential}
\end{corollary}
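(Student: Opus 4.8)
The plan is to obtain this as an immediate specialization of Theorem~\ref{the:lyapunov_exponential_time_varying} by fixing the iteration count to the constant value $\ell_k = \ell$ for every $k \geq 0$. Under this substitution the per-step decay factor $\varepsilon_k = \max\{\beta + \tau\kappa\eta^{\ell_k}, \frac{\sigma + \tau\eta^{\ell_k}\omega}{\tau}\}$ appearing in that theorem collapses to the single time-invariant quantity $\varepsilon = \max\{\beta + \tau\kappa\eta^{\ell}, \frac{\sigma + \tau\eta^{\ell}\omega}{\tau}\}$, and the contraction inequality $\mathcal{L}(x_{k+1}, z_{k+1}) \leq \varepsilon_k \mathcal{L}(x_k, z_k)$ becomes exactly the claimed bound $\mathcal{L}(x_{k+1}, z_{k+1}) \leq \varepsilon \mathcal{L}(x_k, z_k)$. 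I would therefore not reprove the contraction from scratch: its entire content is already carried by the first inequality in the proof of Theorem~\ref{the:lyapunov_exponential_time_varying}, which was established for arbitrary (possibly time-varying) $\ell_k$ through Lemma~\ref{lem:auxiliary_dynamics}.

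The one remaining point to confirm is that $\varepsilon \in (0,1)$. Since $\ell > \ell^{\star}$, Theorem~\ref{the:stability_theorem} guarantees the existence of a $\tau > 0$ satisfying the inequalities in \eqref{eq:tau_equation}. From the first of these, $(\beta - 1) + \tau\eta^{\ell}\kappa < 0$ gives $\beta + \tau\eta^{\ell}\kappa < 1$; from the second, $\sigma + (\eta^{\ell}\omega - 1)\tau < 0$ rearranges to $(\sigma + \tau\eta^{\ell}\omega)/\tau < 1$. Both entries of the maximum are strictly positive because $\beta, \tau, \kappa, \sigma, \omega$ and $\eta^{\ell}$ are each positive, so their maximum $\varepsilon$ lies strictly in $(0,1)$, which is precisely the claim.

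There is no genuine obstacle here, since the work has already been done in the time-varying setting. The only thing worth spelling out is the $\varepsilon \in (0,1)$ bookkeeping, which ties the hypothesis $\ell > \ell^{\star}$ to the feasibility conditions \eqref{eq:tau_equation} via Theorem~\ref{the:stability_theorem}; the time-varying statement leaves this implicit on a per-step basis, so making it explicit for the constant-$\ell$ case completes the argument.
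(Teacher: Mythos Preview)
Your proposal is correct and matches the paper's approach exactly: the paper states that the corollary ``follows directly from the above theorem for a fixed $\ell>\ell^{\star}$,'' which is precisely your specialization of Theorem~\ref{the:lyapunov_exponential_time_varying} to constant $\ell_k=\ell$. Your additional bookkeeping for $\varepsilon\in(0,1)$ via \eqref{eq:tau_equation} is a welcome clarification that the paper leaves implicit.
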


While the exponential stability of suboptimal MPC has been pointed out in \cite{zanelli2020lyapunov}, its rate of decay has not been explicitly derived. The above results define this rate, which becomes the main tool used for the finite-time analysis of the algorithm in the next section.

\section{Finite-time Analysis}

The exponential stability of the combined dynamics \eqref{eq:suboptimal_combined} allows one to study the finite-time performance of the suboptimal LQMPC. In this section, we quantify the incurred suboptimality \eqref{eq:suboptimality_gap} of suboptimal dynamics \eqref{eq:suboptimal_combined}  both for varying and for fixed number of optimization iterations $\ell$. Before proceeding with the suboptimality analysis we introduce some auxiliary lemmas.

\begin{lemma}
  Given the dynamics \eqref{eq:suboptimal_combined}, for  all $s_0\in \Sigma_N$,  $k\geq0$ and  $\ell_k> \ell^{\star}$
\begin{equation*}
    \|\Delta \mu_k\| : = \|z_k - \mu^{\star}(x_k^{\star})\|\leq b_0\|x_0\|_W\prod_{i=-1}^{k-1}\varepsilon_{i}+c\|x_0\|_W\beta^k,
\end{equation*}
where $c = \max\{\tau^{-1},\|H^{-\frac{1}{2}}\|\|H^{-\frac{1}{2}}GP^{-\frac{1}{2}}\|\}$, $b_0\!=\!c \cdot h_0$ and $h_0$ is defined as in Theorem \ref{the:lyapunov_exponential_time_varying}.
\label{lem:delta_input_varying_l}
\end{lemma}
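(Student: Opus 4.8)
The plan is to bound $\|z_k - \mu^\star(x_k^\star)\|$ with the plain triangle inequality $\|\Delta\mu_k\| \le \|z_k\| + \|\mu^\star(x_k^\star)\|$, rather than through the difference $\mu^\star(x_k) - \mu^\star(x_k^\star)$; the latter would require controlling the state gap $x_k - x_k^\star$, which feeds back into $\Delta\mu_k$ via \eqref{eq:suboptimal_perturbed} and leads to a circular recursion. With this split, the term $\|\mu^\star(x_k^\star)\|$ will supply the optimal rate $\beta^k$, while $\|z_k\|$ will supply the Lyapunov rate $\prod_{i=-1}^{k-1}\varepsilon_i$, matching the two terms of the claim.

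For the optimal-trajectory term I would use that the optimal input vanishes at the origin, $\mu^\star(\boldsymbol 0) = \boldsymbol 0$, so that the weighted Lipschitz estimate of Lemma \ref{lem:lispchitz} gives $\|\mu^\star(x_k^\star)\| = \|\mu^\star(x_k^\star) - \mu^\star(\boldsymbol 0)\| \le \|H^{-\frac12}\|\,\|H^{-\frac12}G x_k^\star\|$. Inserting $P^{-\frac12}P^{\frac12}$ converts this to the $P$-norm, yielding $\|\mu^\star(x_k^\star)\| \le \|H^{-\frac12}\|\,\|H^{-\frac12}GP^{-\frac12}\|\,\|x_k^\star\|_P \le c\,\|x_k^\star\|_P$. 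Lemma \ref{lem:optimal_stability} then chains $\|x_k^\star\|_P \le \psi(x_k^\star) \le \beta^k\psi(x_0) \le \beta^k\|x_0\|_W$, so this term is at most $c\|x_0\|_W\beta^k$.

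For $\|z_k\|$ I would use the decomposition $z_k = \mu^\star(x_k) + d_k(x_k)$ from \eqref{eq:suboptimal_perturbed}, where $x_k \in \Gamma_N$ by forward invariance of $\Sigma_N$. The same weighted-Lipschitz argument along the suboptimal trajectory gives $\|\mu^\star(x_k)\| \le c\,\|x_k\|_P \le c\,\psi(x_k)$. The crucial step, and the only place where the exact value of $c$ enters, is to fold $\|\mu^\star(x_k)\|$ and $\|d_k(x_k)\|$ into a single multiple of the Lyapunov function \eqref{eq:lyapunov_function}: since $c \ge \tau^{-1}$ we have $c\tau \ge 1$, hence $\|d_k(x_k)\| \le c\tau\|d_k(x_k)\|$ and therefore $\|z_k\| \le c\psi(x_k) + c\tau\|d_k(x_k)\| = c\,\mathcal{L}(x_k,z_k)$. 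This consolidation is what lets the final coefficient be exactly $b_0 = c h_0$ rather than an inflated constant, and I expect it to be the main (if modest) obstacle.

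To finish, I would iterate the one-step contraction of Theorem \ref{the:lyapunov_exponential_time_varying} to obtain $\mathcal{L}(x_k,z_k) \le \big(\prod_{i=0}^{k-1}\varepsilon_i\big)\mathcal{L}(x_0,z_0) \le h_0\|x_0\|_W\prod_{i=0}^{k-1}\varepsilon_i$, and rewrite the product using the convention $\varepsilon_{-1} = 1$ as $\prod_{i=-1}^{k-1}\varepsilon_i$ (which also covers the case $k=0$ uniformly). Substituting gives $\|z_k\| \le b_0\|x_0\|_W\prod_{i=-1}^{k-1}\varepsilon_i$, and adding the two pieces through the triangle inequality yields $\|\Delta\mu_k\| \le b_0\|x_0\|_W\prod_{i=-1}^{k-1}\varepsilon_i + c\|x_0\|_W\beta^k$, as claimed.
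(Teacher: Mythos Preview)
Your argument is correct and reaches the same final estimate through a slightly different initial split. The paper instead inserts $\mu^\star(x_k)$ as a pivot, writing $\|\Delta\mu_k\|\le\|z_k-\mu^\star(x_k)\|+\|\mu^\star(x_k)-\mu^\star(x_k^\star)\|$, and then dispatches the second term with Lemma~\ref{lem:lispchitz} together with the crude bound $\|x_k-x_k^\star\|_P\le\|x_k\|_P+\|x_k^\star\|_P$; this lands on exactly the same intermediate inequality $\|\Delta\mu_k\|\le c\big(\tau\|d_k(x_k)\|+\|x_k\|_P+\|x_k^\star\|_P\big)\le c\big(\mathcal{L}(x_k,z_k)+\psi(x_k^\star)\big)$ that your split produces. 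So your worry that the $\mu^\star(x_k)-\mu^\star(x_k^\star)$ route forces a circular recursion through $x_k-x_k^\star$ is unfounded---the paper sidesteps it with a plain triangle inequality on the state gap---but your alternative of splitting at the origin and then decomposing $z_k=\mu^\star(x_k)+d_k(x_k)$ is equally valid and arguably a touch cleaner, since it never writes the difference $x_k-x_k^\star$ at all. Both routes use the same ingredients (Lemma~\ref{lem:lispchitz}, Lemma~\ref{lem:optimal_stability}, Theorem~\ref{the:lyapunov_exponential_time_varying}) and deliver the identical constants $c$ and $b_0=ch_0$.
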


\begin{proof}
The proof follows by showing that $\|\Delta \mu_k\|$ is upper bounded by the the sum of two Lyapunov functions and enjoys the same rate of decay as those. In particular, for any $k\geq0$ and $s_k\in\Sigma_N$
\begin{align*}
    \|\Delta \mu_k\| &= \|z_k  - \mu^{\star}(x_k) + \mu^{\star}(x_k) - \mu^{\star}(x_k^{\star})\|\\
    & \leq \|z_k  - \mu^{\star}(x_k)\| + \|\mu^{\star}(x_k) - \mu^{\star}(x^{\star}_k)\|\\
    & \leq \tau^{-1} \tau \|d(z_k,x_k) \| + {\|H^{-\frac{1}{2}}\|} \|H^{-\frac{1}{2}}G(x_k-x^{\star}_k)\|\\
    &\leq \tau^{-1} \tau \|d(z_k,x_k) \| \\
    & \qquad+ {\|H^{-\frac{1}{2}}\|}\|H^{-\frac{1}{2}}G P^{-\frac{1}{2}}\|(\|x_k\|_P + \|x_k^{\star}\|_{P}) \\
    & \leq c \left(\underbrace{\tau \|d(z_k,x_k)\| + \|x_k\|_P}_{:=l(x_k,z_k)} + \underbrace{\|x_k^{\star}\|_P}_{:=g(x_k^{\star})}\right),
\end{align*}
where the first and third inequalities make use of the triangle inequality, and the second one follows from Lemma \ref{lem:lispchitz}. From Lemma \ref{lem:optimal_stability} and \eqref{eq:lyapunov_function}, we note that $l(x_k,z_k) \leq \mathcal{L}(x_k,z_k)$ and  $g(x_k^{\star}) \leq  \psi(x_k^{\star})$, and it holds that
\begin{align*}
    \|\Delta \mu_k\| \leq c \left(\mathcal{L}(x_k,z_k) + \psi(x_k^{\star})\right).
\end{align*}
The result follows by a repeated application of \eqref{eq:exp_rate_optimal_v} and \eqref{eq:exponentatial_stability_varying_l}. 
\end{proof}

For a fixed  $\ell>\ell^{\star}$ the following corollary follows.

\begin{corollary}
Given the dynamics \eqref{eq:suboptimal_combined}, for  all  $\ell> \ell^{\star}$, $s_0\in \Sigma_N$ and  $k\geq0$
\begin{equation*}
\|\Delta \mu_k\| \leq b\|x_0\|_W\varepsilon^k+c\|x_0\|_W\beta^k,
\end{equation*}
where $h = 1+\tau\eta^{\ell}L\|W^{-\frac{1}{2}}\|$ and $b= c \cdot h$.
\label{lem:delta_input}
\end{corollary}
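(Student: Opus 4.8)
The plan is to obtain this corollary as a direct specialization of Lemma~\ref{lem:delta_input_varying_l} to the case where the iteration count is held constant at $\ell_k = \ell > \ell^{\star}$ for every $k \geq 0$. First I would observe that under a fixed $\ell$, the per-step contraction factors $\varepsilon_k$ introduced in Theorem~\ref{the:lyapunov_exponential_time_varying} all coincide with the single constant $\varepsilon = \max\{\beta+\tau\kappa\eta^{\ell},\, (\sigma+\tau\eta^{\ell}\omega)/\tau\}$, which Corollary~\ref{the:lyapunov_exponential} guarantees lies in $(0,1)$. This immediately simplifies the two time-dependent quantities appearing in the bound of Lemma~\ref{lem:delta_input_varying_l}.

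Next I would evaluate the product in that bound. Using the convention $\varepsilon_{-1} := 1$ from Theorem~\ref{the:lyapunov_exponential_time_varying} together with $\varepsilon_i = \varepsilon$ for all $i \geq 0$, the product collapses as
$$\prod_{i=-1}^{k-1}\varepsilon_i = \varepsilon_{-1}\prod_{i=0}^{k-1}\varepsilon = \varepsilon^k.$$
Similarly, since $\ell_0 = \ell$, the constant $h_0 = 1 + \tau\eta^{\ell_0}L\|W^{-\frac{1}{2}}\|$ reduces exactly to $h = 1 + \tau\eta^{\ell}L\|W^{-\frac{1}{2}}\|$, and therefore $b_0 = c\cdot h_0 = c\cdot h = b$, with $c$ itself unchanged as it does not depend on $\ell_k$.

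Substituting these three simplifications into the bound of Lemma~\ref{lem:delta_input_varying_l} yields $\|\Delta\mu_k\| \leq b\|x_0\|_W\varepsilon^k + c\|x_0\|_W\beta^k$, which is precisely the claimed inequality. I do not anticipate any genuine obstacle, as the corollary is purely a consequence of holding $\ell$ constant; the only point requiring care is the bookkeeping of the product's lower index, ensuring the $\varepsilon_{-1}=1$ factor is correctly separated so that exactly $k$ copies of $\varepsilon$ remain and reproduce the geometric rate $\varepsilon^k$.
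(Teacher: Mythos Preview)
Your proposal is correct and matches the paper's approach exactly: the paper simply states that the corollary follows from Lemma~\ref{lem:delta_input_varying_l} for fixed $\ell>\ell^{\star}$ without giving further details, and your argument supplies precisely the specialization (collapsing $\varepsilon_i\to\varepsilon$, $h_0\to h$, $b_0\to b$, and reducing the product to $\varepsilon^k$ via $\varepsilon_{-1}=1$) that makes this immediate.
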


Next, we show that the exponential stability of the combined system-optimizer dynamics \eqref{eq:suboptimal_combined} implies  the same rate for the state, $x_k$, in the ROA estimate  $\Gamma_N$.

\begin{lemma}
Given the dynamics \eqref{eq:suboptimal_combined},  for  all $s_0\in \Sigma_N$,  $k\geq0$ and  $\ell_k> \ell^{\star}$
\begin{equation*}
    \|x_k\| \leq   h_0\|P^{-\frac{1}{2}}\|\cdot\|x_0\|_W\prod_{i=-1}^k\varepsilon_i,
\end{equation*}
where $h_0$ is defined as in Theorem \ref{the:lyapunov_exponential_time_varying}.
\label{lem:x_upper_bound_varying_l}
\end{lemma}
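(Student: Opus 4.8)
The plan is to control $\|x_k\|$ by the Lyapunov function $\mathcal{L}$ of Theorem~\ref{the:lyapunov_exponential_time_varying} and then invoke its geometric decay. First I would descend through a chain of norm relations: writing $x_k = P^{-\frac{1}{2}}\big(P^{\frac{1}{2}}x_k\big)$ gives $\|x_k\|\leq \|P^{-\frac{1}{2}}\|\,\|x_k\|_P$, and the lower bound $\|x\|_P^2\leq V(x)$ in \eqref{eq:V_bounds} yields $\|x_k\|_P\leq \sqrt{V(x_k)}=\psi(x_k)$. Hence $\|x_k\|\leq \|P^{-\frac{1}{2}}\|\,\psi(x_k)$, reducing the claim to a bound on $\psi(x_k)$.

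Next I would dominate $\psi$ by $\mathcal{L}$. Since $\tau>0$ and $\|z-\mu^{\star}(x)\|\geq 0$, the definition \eqref{eq:lyapunov_function} immediately gives $\psi(x_k)\leq \mathcal{L}(x_k,z_k)$, so that $\|x_k\|\leq \|P^{-\frac{1}{2}}\|\,\mathcal{L}(x_k,z_k)$. It then remains only to propagate the decay: unrolling the one-step contraction $\mathcal{L}(x_{j+1},z_{j+1})\leq \varepsilon_j\,\mathcal{L}(x_j,z_j)$ from Theorem~\ref{the:lyapunov_exponential_time_varying} over $j=0,\dots,k-1$ gives $\mathcal{L}(x_k,z_k)\leq \big(\prod_{i=0}^{k-1}\varepsilon_i\big)\,\mathcal{L}(x_0,z_0)$, and the initial-condition estimate of the same theorem gives $\mathcal{L}(x_0,z_0)\leq h_0\|x_0\|_W$. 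Using the convention $\varepsilon_{-1}=1$ to absorb the leading factor into the product and combining the three displays then produces the asserted geometric bound with constant $h_0\|P^{-\frac{1}{2}}\|$.

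Each individual step is elementary, so I do not expect a genuine technical obstacle; the points that require care are bookkeeping. Specifically, I would verify the product indexing and the base case $k=0$ against the convention $\varepsilon_{-1}=1$ (at $k=0$ the product collapses and the bound reduces to $\|x_0\|\leq h_0\|P^{-\frac{1}{2}}\|\,\|x_0\|_W$, consistent with the initial-condition estimate), and I would confirm that every invocation of Lemma~\ref{lem:optimal_stability}, Lemma~\ref{lem:auxiliary_dynamics}, and Theorem~\ref{the:lyapunov_exponential_time_varying} is legitimate along the whole trajectory. That last requirement is precisely the forward invariance of $\Sigma_N$ under \eqref{eq:suboptimal_combined} established in Theorem~\ref{the:stability_theorem}, which guarantees $s_j\in\Sigma_N$ for all $j\geq 0$ whenever $s_0\in\Sigma_N$; I would flag this invariance as the single hypothesis to check rather than an obstacle to overcome. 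A minor alternative, if the sharper end index is sought, is to extract the final contraction directly from the first inequality of Lemma~\ref{lem:auxiliary_dynamics}, noting that $\beta\leq\varepsilon_{k-1}$ and $\sigma\leq\tau\varepsilon_{k-1}$ yield $\psi(x_k)\leq\varepsilon_{k-1}\mathcal{L}(x_{k-1},z_{k-1})$ before substituting the unrolled bound.
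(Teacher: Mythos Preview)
Your proposal is correct and follows essentially the same route as the paper: bound $\|x_k\|$ by $\|P^{-1/2}\|\,\|x_k\|_P$, then by $\|P^{-1/2}\|\,\psi(x_k)\leq\|P^{-1/2}\|\,\mathcal{L}(x_k,z_k)$, invoke forward invariance of $\Sigma_N$ from Theorem~\ref{the:stability_theorem}, unroll the contraction of Theorem~\ref{the:lyapunov_exponential_time_varying}, and apply the initial estimate $\mathcal{L}(x_0,z_0)\leq h_0\|x_0\|_W$. Your observation about the product's upper index is well taken: the unrolling naturally yields $\prod_{i=-1}^{k-1}\varepsilon_i$ rather than $\prod_{i=-1}^{k}\varepsilon_i$, and the paper's own proof exhibits the same discrepancy with its stated bound, so this is an indexing slip in the lemma statement rather than a defect in your argument.
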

\begin{proof} From Theorem \ref{the:stability_theorem}, $\Sigma_N$ is a forward invariant ROA estimate for the suboptimal dynamics. Hence  from Lemma \ref{lem:optimal_stability} we have that for all $s_0\in \Sigma_N$ 
\begin{equation*}
    \|x_k\|_P \leq \psi(x_k)  + \tau\|z_k - \mu^\star(x_k)\| = \mathcal{L}(x_k,z_k),
\end{equation*}
and
\begin{align*}
    \|x_k\| \leq \|P^{-\frac{1}{2}}\|\mathcal{L}(x_0,z_0)\!\prod_{i=-1}^k\varepsilon_i\! \leq   h_0\|P^{-\frac{1}{2}}\|\|x_0\|_W\prod_{i=-1}^k\varepsilon_i,
\end{align*}
where we use the submultiplicative property of norms and Theorem \ref{the:lyapunov_exponential_time_varying}.
\end{proof}

\begin{corollary}
 Given the dynamics \eqref{eq:suboptimal_combined}, for  all  $\ell> \ell^{\star}$, $s_0\in \Sigma_N$ and  $k\geq0$
\begin{equation*}
    \|x_k\| \leq h\|P^{-\frac{1}{2}}\|\cdot\|x_0\|_W \cdot  \varepsilon^k,
\end{equation*}
where $h$ is defined as in Corollary \ref{lem:delta_input}.
\label{lem:x_upper_bound}
\end{corollary}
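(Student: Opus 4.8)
The plan is to obtain this bound as an immediate specialization of Lemma~\ref{lem:x_upper_bound_varying_l} to the case of a constant number of inner optimization iterations, $\ell_k = \ell > \ell^{\star}$ for all $k\geq 0$. The time-varying lemma already delivers a bound of exactly the right shape, with a product of per-step contraction factors $\prod_{i=-1}^{k}\varepsilon_i$ in place of the geometric term $\varepsilon^k$; the task is therefore only to verify that this product collapses to a power of the single constant rate $\varepsilon$ and that the initial-state constant $h_0$ reduces to $h$.

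The key steps I would carry out are the following. First, with $\ell_k \equiv \ell$ the per-step contraction factors defined in Theorem~\ref{the:lyapunov_exponential_time_varying} all coincide: by Corollary~\ref{the:lyapunov_exponential} each $\varepsilon_k$ equals the single value $\varepsilon = \max\{\beta + \tau\kappa\eta^{\ell}, (\sigma + \tau\eta^{\ell}\omega)/\tau\} \in (0,1)$, while $\varepsilon_{-1} = 1$ by convention. Second, the initial-state constant $h_0 = 1 + \tau\eta^{\ell_0}L\|W^{-\frac{1}{2}}\|$ reduces to $h = 1 + \tau\eta^{\ell}L\|W^{-\frac{1}{2}}\|$ of Corollary~\ref{lem:delta_input}, since $\ell_0 = \ell$. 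Substituting both into the conclusion of Lemma~\ref{lem:x_upper_bound_varying_l} leaves only the telescoped product $\prod_{i=-1}^{k}\varepsilon_i$ to evaluate; using $\varepsilon_{-1}=1$ and $\varepsilon_i = \varepsilon$ for $i\geq 0$, this product is a pure power of $\varepsilon$, and multiplying through by $h\|P^{-\frac{1}{2}}\|\|x_0\|_W$ yields the stated geometric decay.

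Alternatively, for a self-contained derivation I would mirror the proof of Lemma~\ref{lem:x_upper_bound_varying_l} line by line, but invoke the constant-rate estimate $\mathcal{L}(x_k,z_k)\leq \varepsilon^k\,\mathcal{L}(x_0,z_0)$ implied by Corollary~\ref{the:lyapunov_exponential} instead of the time-varying product, together with $\|x_k\|_P \leq \mathcal{L}(x_k,z_k)$ (from Lemma~\ref{lem:optimal_stability} and the definition \eqref{eq:lyapunov_function} of $\mathcal{L}$) and the initial bound $\mathcal{L}(x_0,z_0)\leq h\|x_0\|_W$. There is essentially no genuine obstacle here, the result being a routine specialization; the only point requiring minor care is the bookkeeping of the telescoped exponent, and since $\varepsilon \in (0,1)$ any off-by-one in the number of factors only tightens the right-hand side, so the clean form $\varepsilon^k$ is safely valid.
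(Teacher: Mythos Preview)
Your proposal is correct and matches the paper's intent: the corollary is presented in the paper without a separate proof, as an immediate specialization of Lemma~\ref{lem:x_upper_bound_varying_l} to constant $\ell_k=\ell$, exactly as you describe. Your handling of the off-by-one in $\prod_{i=-1}^{k}\varepsilon_i=\varepsilon^{k+1}\leq\varepsilon^k$ is fine since $\varepsilon\in(0,1)$.
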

\subsection{Incurred Suboptimality}
The incurred suboptimality of the suboptimal dynamics \eqref{eq:suboptimal_combined} can  be bounded using the bounds in Lemmas \ref{lem:delta_input_varying_l} and \ref{lem:x_upper_bound_varying_l}.

\begin{theorem}
\label{the:suboptimality_varying_ell}
Given the dynamics \eqref{eq:suboptimal_combined}, for  all $s_0\in \Sigma_N$,  $k\geq0$ and  $\ell_k> \ell^{\star}$, its incurred suboptimality  is bounded by
\begin{equation*}
\mathcal{R}(x_0;\ell_{[0,\hdots,T-1]}) \leq \overline{c} \|x_0\|^2_W\cdot \sum_{k=0}^{T}\prod_{i=0}^{k}\varepsilon^2_{i-1} \leq \frac{\overline{c} \|x_0\|^2_W}{1-\overline{\varepsilon}^2},
\end{equation*}
where
\begin{equation*}
\begin{split}
\overline{c} = \max\{ &\|\overline{R}\|\left(b_0+c\right) \left((b_0+c)+\frac{2L}{\sqrt{\lambda^{-}(P)}}\right),  \\
&  \|\overline{Q}\| \cdot \left(\|P^{-\frac{1}{2}}\|^2h_0^2 + \frac{1}{{\lambda^{-}(P)}}\right)\},
\end{split}    
\end{equation*}
$\overline{R} = S^\top R S$, $\|\overline{Q}\| = \max\{\|Q\|,\|P\|\}$ and $\overline{\varepsilon} := \max\{\beta+\tau\eta^{\overline{\ell}}\kappa, \frac{\sigma+\tau\eta^{\overline{\ell}}\omega}{\tau}\}$ with $\overline{\ell} = \min_{k}\{\ell_k\}_{k=0}^{T-1}$.
\end{theorem}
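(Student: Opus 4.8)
The plan is to expand the incurred suboptimality stage by stage and to bound each term by a constant multiple of $\|x_0\|_W^2$ times the squared Lyapunov product $\bigl(\prod_{i=-1}^{k-1}\varepsilon_i\bigr)^2$, so that a final geometric summation produces the stated bound. Writing out \eqref{eq:suboptimality_gap} with the definition of $J_T$ gives
\[
\mathcal{R} = \bigl(\|x_T\|_P^2 - \|x_T^\star\|_P^2\bigr) + \sum_{k=0}^{T-1}\bigl(\|x_k\|_Q^2 - \|x_k^\star\|_Q^2\bigr) + \sum_{k=0}^{T-1}\bigl(\|u_k\|_R^2 - \|u_k^\star\|_R^2\bigr),
\]
where $u_k = Sz_k$ and $u_k^\star = S\mu^\star(x_k^\star)$, so that $\|u_k\|_R^2 = \|z_k\|_{\overline R}^2$ and $\|u_k^\star\|_R^2 = \|\mu^\star(x_k^\star)\|_{\overline R}^2$ with $\overline R = S^\top R S$. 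I would treat the input differences and the state/terminal differences separately, since only the former admits a sharp ``difference'' estimate.

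For the input terms I would use the identity $\|a\|_{\overline R}^2 - \|b\|_{\overline R}^2 = \|a-b\|_{\overline R}^2 + 2\langle a-b,\,\overline R\, b\rangle$ with $a = z_k$, $b = \mu^\star(x_k^\star)$, so that $a-b = \Delta\mu_k$. Bounding $\|\Delta\mu_k\|$ by Lemma \ref{lem:delta_input_varying_l} and $\|\mu^\star(x_k^\star)\| \le L\|x_k^\star\|$ by Lemma \ref{lem:lispchitz} (using $\mu^\star(\boldsymbol{0})=\boldsymbol{0}$), together with the optimal decay $\|x_k^\star\| \le \|x_0\|_W\beta^k/\sqrt{\lambda^-(P)}$ from Lemma \ref{lem:optimal_stability}, each input difference is controlled by $\|\overline R\|(b_0+c)\bigl((b_0+c)+2L/\sqrt{\lambda^-(P)}\bigr)\|x_0\|_W^2\bigl(\prod_{i=-1}^{k-1}\varepsilon_i\bigr)^2$, which is exactly the first argument of the maximum defining $\overline c$. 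The crucial simplification is the inequality $\beta \le \varepsilon_i$ for every $i$, immediate from $\varepsilon_i \ge \beta + \tau\kappa\eta^{\ell_i}$; this lets me replace every factor $\beta^k$ by $\prod_{i=-1}^{k-1}\varepsilon_i$ and hence express all decaying contributions with one common rate.

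For the state and terminal costs I have no direct bound on $x_k - x_k^\star$, so I would instead use the coarser estimate $\|x_k\|_Q^2 - \|x_k^\star\|_Q^2 \le \|x_k\|_Q^2 + \|x_k^\star\|_Q^2$ (and the analogue in the $P$-norm at $k=T$). The first summand is bounded by $\|Q\|\,\|x_k\|^2$ via Lemma \ref{lem:x_upper_bound_varying_l}, giving $\|Q\|\,\|P^{-\frac12}\|^2 h_0^2\|x_0\|_W^2\bigl(\prod_{i=-1}^{k-1}\varepsilon_i\bigr)^2$, while the second obeys $\|Q\|\,\lambda^-(P)^{-1}\|x_k^\star\|_P^2 \le \|Q\|\,\lambda^-(P)^{-1}\|x_0\|_W^2\beta^{2k}$ and is then re-expressed in the common rate; together these yield the second argument of the maximum defining $\overline c$, with $\|\overline Q\| = \max\{\|Q\|,\|P\|\}$ absorbing both the stage and the terminal weight.

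Collecting the per-stage contributions under the single constant $\overline c$ gives $\mathcal{R} \le \overline c\,\|x_0\|_W^2\sum_{k=0}^{T}\prod_{i=0}^{k}\varepsilon_{i-1}^2$ (recalling $\varepsilon_{-1}=1$), which is the first claimed inequality. For the second I would invoke monotonicity of $\varepsilon_i$ in $\ell_i$: since $\ell_k \ge \overline\ell := \min_k \ell_k$ and $\eta \in (0,1)$, each $\varepsilon_i \le \overline\varepsilon$, whence $\prod_{i=0}^k\varepsilon_{i-1}^2 \le \overline\varepsilon^{\,2k}$ and the finite sum is dominated by $\sum_{k=0}^\infty \overline\varepsilon^{\,2k} = (1-\overline\varepsilon^2)^{-1}$, which converges because $\overline\varepsilon \in (0,1)$. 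The main obstacle I anticipate is the bookkeeping required to merge the two distinct decay rates — the optimal rate $\beta$ and the time-varying combined-dynamics rates $\varepsilon_i$ — into the single product $\prod_i\varepsilon_i$, and to organize the state, input, and terminal constants so that they collapse cleanly into the stated $\overline c$.
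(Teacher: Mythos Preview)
Your proposal is correct and follows essentially the same route as the paper: the same splitting into input contributions (handled via $\Delta\mu_k$ and Lemmas~\ref{lem:lispchitz}, \ref{lem:optimal_stability}, \ref{lem:delta_input_varying_l}) and state/terminal contributions (handled by the coarse bound $\|x_k\|_{\overline Q}^2+\|x_k^\star\|_{\overline Q}^2$ together with Lemmas~\ref{lem:optimal_stability} and~\ref{lem:x_upper_bound_varying_l}), the same use of $\beta\le\varepsilon_i$ to unify rates, and the same geometric-series finish via $\overline\varepsilon$. Your write-up is in fact more explicit than the paper's, which compresses the expansion step into the sentence ``expanding the squares, collecting the terms, and using the submultiplicative property of the norms'' before stating the $s_1+s_2$ decomposition.
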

\begin{proof}
    Using the quadratic form of the cost, expanding the squares, collecting the terms, and using the submultiplicative property of the norms, it can be shown that the incurred suboptimality \eqref{eq:suboptimality_gap} attains the following upper bound
\begin{equation*}
\begin{split}
     \mathcal{R}(x_0,\ell_{[0,\dots,T-1]})&\leq\underbrace{\sum_{k=0}^T \|x_k\|_{\overline{Q}}^2  + \|x_k^{\star}\|^2_{\overline{Q}}}_{:=s_1}\\
  &+\underbrace{\sum_{k=0}^{T-1} \|\Delta \mu_k\|^2_{\overline{R}} + 2\|\Delta \mu_k\|\|\mu^{\star}(x_k^{\star})\|\!\cdot\!\|\overline{R}\|}_{:=s_2},
\end{split}
\end{equation*}
where $\Delta \mu_k$ is defined as in Lemma \ref{lem:delta_input_varying_l}. First, we bound the term due to the input difference
\begin{align*}
    s_2&\leq \|\overline{R}\|\sum_{k=0}^{T-1}\|\left(\Delta \mu_k\|^2+2L\|\Delta \mu_k\|\|x_k^\star\|\right)\\
    &\leq\|\overline{R}\|\sum_{k=0}^{T-1}\left(\|\Delta \mu_k\|^2 +2\frac{L\beta^k\|x_0\|_W}{\sqrt{\lambda^{-}(P)}}\|\Delta {\mu_k}\|\right) \\
    &\leq \|\overline{R}\| \|x_0\|^2_W \left((b_0+c)^2+\frac{2L\left(b_0+c\right)}{\sqrt{\lambda^{-}(P)}}\right)\cdot\sum_{k=0}^{T}\prod_{i=-1}^{k-1}\varepsilon^2_i,
\end{align*}
where the first inequality follows from Lemma \ref{lem:lispchitz}, the second from Lemma \ref{lem:optimal_stability} and the third from Lemma \ref{lem:delta_input_varying_l}, and the fact that $\beta < \varepsilon_k$ for all $k\geq0$. Using the latter fact and the bounds in Lemmas \ref{lem:optimal_stability} and \ref{lem:x_upper_bound_varying_l}, it follows that
\begin{align*}
    s_1\leq \|\overline{Q}\| \cdot \|x_0\|^2_W \left(\|P^{-\frac{1}{2}}\|^2h_0^2 + \frac{1}{{\lambda^{-}(P)}}\right) \sum_{k=0}^{T}\prod_{i=-1}^{k-1}\varepsilon^2_i.
\end{align*}
The intermediate bound follows by combining the above two bounds for $s_1$ and $s_2$. The finite upper bound follows from geometric series with the appropriate definition of $\overline{\varepsilon}$.
\end{proof}

For a constant number of optimization iterations per timestep, the following corollary follows.
\begin{corollary}
\label{the:suboptimality_fixed_ell}
Given the dynamics \eqref{eq:suboptimal_perturbed}, for  all  $\ell> \ell^{\star}$, $s_0\in \Sigma_N$ and  $k\geq0$, its  incurred suboptimality is bounded by
\begin{equation*}
\mathcal{R}(x_0;\ell) <
\frac{\overline{c}\|x_0\|_W^2}{1-\varepsilon^2},
\end{equation*}
where $\overline{c}$ is given in Theorem \ref{the:suboptimality_varying_ell} and $\varepsilon := \max\{\beta+\tau\kappa\eta^{\ell}, \frac{\sigma+\tau\eta^{\ell}\omega}{\tau}\} \in (0,1)$.
\end{corollary}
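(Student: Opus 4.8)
The plan is to obtain this result as a direct specialization of Theorem~\ref{the:suboptimality_varying_ell} to the case of a constant iteration count $\ell_k = \ell$ for all $k$. First I would observe that when every $\ell_k$ equals $\ell > \ell^{\star}$, the per-step rates collapse: $\varepsilon_k = \max\{\beta+\tau\kappa\eta^{\ell}, \frac{\sigma+\tau\eta^{\ell}\omega}{\tau}\} = \varepsilon$ for every $k \geq 0$, while the convention $\varepsilon_{-1} = 1$ is retained. Consequently the quantity $\overline{\varepsilon}$ appearing in the theorem, defined through $\overline{\ell} = \min_k \ell_k = \ell$, likewise reduces to $\varepsilon$, so that the same constant $\overline{c}$ carries over verbatim and only the summation factor differs.

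Next I would simplify the geometric factor in the intermediate bound of Theorem~\ref{the:suboptimality_varying_ell}. With $\varepsilon_{-1}=1$ and $\varepsilon_j = \varepsilon$ for $j \geq 0$, each product telescopes to $\prod_{i=0}^{k}\varepsilon_{i-1}^2 = \prod_{j=-1}^{k-1}\varepsilon_j^2 = \varepsilon^{2k}$, so the outer sum becomes the partial geometric series $\sum_{k=0}^{T}\varepsilon^{2k}$. Since $\varepsilon \in (0,1)$ by Corollary~\ref{the:lyapunov_exponential}, this finite sum is strictly bounded above by $\sum_{k=0}^{\infty}\varepsilon^{2k} = (1-\varepsilon^2)^{-1}$, which yields the stated strict inequality $\mathcal{R}(x_0;\ell) < \overline{c}\|x_0\|_W^2/(1-\varepsilon^2)$.

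As an alternative route that avoids relying on the time-varying machinery, one could reprove the estimate from scratch by repeating the $s_1$/$s_2$ decomposition in the proof of Theorem~\ref{the:suboptimality_varying_ell}, but substituting the cleaner fixed-$\ell$ bounds of Corollaries~\ref{lem:delta_input} and~\ref{lem:x_upper_bound} for the time-varying estimates of Lemmas~\ref{lem:delta_input_varying_l} and~\ref{lem:x_upper_bound_varying_l}; there $\|\Delta\mu_k\|$ and $\|x_k\|$ decay as $\varepsilon^k$ directly, making the geometric summation transparent. I expect no genuine obstacle in either route, since the corollary is essentially a bookkeeping specialization. The only point demanding care is the index arithmetic in the product $\prod_{i=0}^{k}\varepsilon_{i-1}^2$: one must confirm that the index shift together with $\varepsilon_{-1}=1$ produces $\varepsilon^{2k}$ rather than $\varepsilon^{2(k+1)}$, so that the series starts at $k=0$ with leading term $1$ and sums to exactly $(1-\varepsilon^2)^{-1}$, giving the strict rather than non-strict bound for finite $T$.
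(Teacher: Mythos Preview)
Your proposal is correct and matches the paper's approach: the corollary is stated without a separate proof precisely because it is the direct specialization of Theorem~\ref{the:suboptimality_varying_ell} to constant $\ell_k=\ell$, exactly as you outline. Your index check on $\prod_{i=0}^{k}\varepsilon_{i-1}^2=\varepsilon^{2k}$ is right, and the strict inequality from truncating the geometric series at finite $T$ is the intended source of the strict bound.
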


To the best of our knowledge this is the first analysis that explicitly characterises the \emph{incurred suboptimality} of suboptimal LQMPC in terms of the closed-loop cost. While the bounds are conservative due to the lack of further assumptions on the system, these results can motivate the design of time-varying suboptimal MPC schemes for applications with limited computational budget. In the next section,  we present an example of such an algorithm.

\section{Dim-SuMPC}
The finite-time analysis in the previous section motivates the development of a novel MPC algorithm with a diminishing horizon length. The aim of the proposed method is to maintain the asymptotic and finite-time properties of existing suboptimal MPC methods while reducing the time required to solve the problem. In particular, the novel algorithm reduces the prediction horizon length $N$ of the suboptimal LQMPC a pre-defined $p\in\mathbb{N}$ number of times. This is a design parameter which can be set depending on the available computational budget. For each $p$, we define a sequence $\{N_j\}_{j=0}^p$, such that, $N_0:=N$ defined in Section \ref{sec:problem_formulation} and $N_{j-1}>N_j$ for all $1\leq j\leq p$. The following lemma provides the number of timesteps required to transition from a forward invariant ROA estimate defined for $N_{j-1}$ to a smaller one, defined for $N_j$.

\begin{lemma}
\label{lem:k_j}
    Given the dynamics \eqref{eq:suboptimal_combined},  if for any $1\leq j \leq p$, it holds that $N_{j-1}>N_j$ and $\ell>\ell^{\star}(N_j)$, then for all $s_0\in\Sigma_{N_{j-1}}$ and $k\geq k_j$, where
    \begin{equation}
    \label{eq:k_js}
        k_j\!=\!\frac{\log\left(\lambda_{W_{j-1}}^-(P)\left(N_jd+c\right)\right)-2\log\left(h(N_j)\|x_0\|_{{W_{j-1}}}\right)}{2\log(\varepsilon)},
    \end{equation}
   and $W_j:= W(N_j)$, the following holds
    \begin{equation}
       V_{N_j}(x_k)  \leq {N_jd+c}.
       \label{eq:dim_sumpc_value_function}
    \end{equation}
\end{lemma}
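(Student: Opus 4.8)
The plan is to show that, while the suboptimal scheme \eqref{eq:suboptimal_combined} runs at the horizon $N_{j-1}$ from $s_0\in\Sigma_{N_{j-1}}$, the closed-loop state contracts geometrically fast enough that the value function drops below the sublevel threshold $N_jd+c$ defining $\Gamma_{N_j}$, and then to read off $k_j$ by inverting that geometric bound. Since $s_0\in\Sigma_{N_{j-1}}$, and since the chosen $\ell$ is admissible for the running horizon (this uses $\ell>\ell^{\star}(N_j)$ together with monotonicity of $\ell^{\star}$ in $N$, so that $\ell>\ell^{\star}(N_{j-1})$ as well), Corollary~\ref{lem:x_upper_bound} applies with the horizon-$N_{j-1}$ constants.

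First I would invoke Corollary~\ref{lem:x_upper_bound} at horizon $N_{j-1}$, in the equivalent form $\|x_k\|_P\le h\,\|x_0\|_{W_{j-1}}\,\varepsilon^k$ that its proof (via Lemma~\ref{lem:x_upper_bound_varying_l}) establishes before the final $\|P^{-\frac{1}{2}}\|$ step; measuring $x_0$ in the $W_{j-1}$-norm is exactly what produces the $\|x_0\|_{W_{j-1}}$ in $k_j$, because the decay is driven by the horizon-$N_{j-1}$ Lyapunov function. Next I would upper bound the value function through \eqref{eq:V_bounds}, $V(x_k)\le\|x_k\|_{W_{j-1}}^2$, and convert the $W_{j-1}$-weighted norm into the $P$-weighted norm with the eigenvalue inequality from the Notation, $\|x_k\|_{W_{j-1}}^2\le\lambda_P^{+}(W_{j-1})\|x_k\|_P^2$. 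Chaining these gives $V(x_k)\le\lambda_P^{+}(W_{j-1})\,h^2\,\|x_0\|_{W_{j-1}}^2\,\varepsilon^{2k}$.

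Finally I would require the right-hand side to be at most $N_jd+c$ and solve for $k$: taking logarithms and dividing by $2\log\varepsilon<0$, which reverses the inequality, yields $k\ge k_j$. The prefactor $\lambda_P^{+}(W_{j-1})$ is rewritten as the stated $\lambda_{W_{j-1}}^{-}(P)$ via the reciprocal identity $\lambda_{W_{j-1}}^{-}(P)=1/\lambda_P^{+}(W_{j-1})$, which holds because the generalized eigenvalues of the matrix pencils $(P,W_{j-1})$ and $(W_{j-1},P)$ are mutual reciprocals.

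The step I expect to require the most care is the index bookkeeping: the contraction rate $\varepsilon$, the constant $h$, and the weighting matrix $W_{j-1}$ are all those of the running horizon $N_{j-1}$, whereas only the sublevel threshold $N_jd+c$ comes from the target horizon $N_j$; keeping these straight, and verifying that $\ell>\ell^{\star}(N_j)$ indeed licenses the horizon-$N_{j-1}$ decay bound, is the crux. The reciprocal-eigenvalue rewriting is the only non-mechanical algebraic manipulation, while the remainder is the routine inversion of a geometric decay.
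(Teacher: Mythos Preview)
Your approach is essentially the same as the paper's: both bound $V(x_k)\le\|x_k\|_{W_{j-1}}^2$ via \eqref{eq:V_bounds}, pass to $\|x_k\|_P$ (the paper directly writes $\sqrt{\lambda_{W_{j-1}}^{-}(P)}\,\|x_k\|_{W_{j-1}}\le\|x_k\|_P$, which is your reciprocal step done in one line), apply the $P$-norm version of Corollary~\ref{lem:x_upper_bound} to get $\|x_k\|_P\le h\,\|x_0\|_{W_{j-1}}\varepsilon^k$, and invert. Your bookkeeping remark is apt: the paper's own proof uses $h(N_{j-1})$ (matching your choice), even though the lemma statement prints $h(N_j)$.
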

\begin{proof}
    As $\Sigma_{N_{j-1}}$ is a forward invariant region for the suboptimal dynamics \eqref{eq:suboptimal_combined}, it follows from Lemma \ref{lem:optimal_stability} that
    \begin{equation*}
        V_{N_{j-1}}(x_k)\leq \|x_k\|^2_{W_{j-1}}.
    \end{equation*}
Then, from Corollary \ref{lem:x_upper_bound}
    \begin{equation*}
       \sqrt{\lambda_{W_{j-1}}^{-}(P)} \|x_k\|_{W_{j-1}}\leq\|x_k\|_P \leq \varepsilon^kh(N_{j-1})\|x_0\|_{W_{j-1}}.
    \end{equation*}
By noting from the principle of optimality that $V_{N_j}(x_k) \leq V_{N_{j-1}(x_k)}$, the following is then sufficient to have the inequality in \eqref{eq:dim_sumpc_value_function} hold for all $1\leq j\leq p$
    \begin{equation*}
       \varepsilon^k \cdot\frac{h(N_{j-1})\cdot\|x_0\|_{W_{j-1}}}{\sqrt{\lambda^{-}_{W_{j-1}}(P)}}\leq \sqrt{N_jd+c}.
    \end{equation*}
    This is equivalent to the condition
    \begin{equation*}
        k\geq \frac{\log\!\left(\lambda_{W_{j-1}}^-(P)\left(N_jd+c\right)\right)\!-\!2\log\!\left(h(N_{j-1})\|x_0\|_{{W_{j-1}}}\right)}{2\log(\varepsilon)}.
    \end{equation*}
\end{proof} 
From the above lemma, it follows that, for example,  after $k=k_1$ steps of suboptimal dynamics  evolution with $\ell>\ell^{\star}(N_0)$ updates, the state $x_{k_1}$ will be in a new ROA estimate, $\Gamma_{N_1} = \{x\in \mathbb{R}^n:V_{N_1}(x)\leq N_1d +c\}$. At this point, the MPC problem \eqref{eq:MPC_POCP} can be redefined with the new horizon length $N_1<N_0$ and it can be  solved to optimality from that point on if the computational power allows so. This can then be repeated for all $j$. Note that the computational effort to solve for $\mu^{\star}_{N_j}$ is strictly less than that for the original optimal problem since $N>N_j$ for all $j$. In this new region, the redefined MPC with the reduced horizon length can be solved also suboptimally. In particular, it follows from Theorem \ref{the:stability_theorem} and Corollary \ref{the:lyapunov_exponential} that for each $j = 1,\dots,p$ and $N_j$ if $\ell>\ell^{\star}(N_j)$, the suboptimal dynamics are exponentially stable in the corresponding ROA estimate $\Sigma_{N_j}$. This motivates our proposed diminishing horizon suboptimal MPC scheme, Dim-SuMPC, outlined in Algorithm
\ref{alg:dim_sumpc}.

\emph{DimSuMPC} maintains the recursive feasibility property of \emph{TD-MPC} since the updates on the prediction horizon, from some $N_{j-1}$ to $N_{j}$ are such that the state always remains within a corresponding ROA estimate $\Gamma_{N_{j}}$. Thus, while the regulation/tracking performance of the two schemes is expected to be comparable, the computational time of \emph{DimSuMPC} is expected to be lower, as demonstrated on a numerical example in the next section.

\begin{algorithm}
\caption{Dim-SuMPC}\label{alg:dim_sumpc}
\begin{algorithmic}[1]
\State Fix the sequence $\{N_j\}_{j=0}^p$ and set $j=0$
\State Calculate $\{k_j\}_{j=1}^p$, according to \eqref{eq:k_js}
\State Take any $s_0 = [x_0^\top~z_0^\top]^\top \in \Sigma_{N_0}$
\For{$k = 0, \dots,T-1$}
    \If{$k \geq k_{j+1}$}
         \State  $j\gets j+1$
    \EndIf
\State Set $\ell_k > \ell^{\star}(N_j)$ and compute $z_k$ according to \eqref{eq:suboptimal_input}
\State Apply $u_k = Sz_k$ and get $x_{k+1}$ according to \eqref{eq:suboptimal_system}
\EndFor
\end{algorithmic}
\end{algorithm}
Guidelines on how to choose a valid initial point in step $2$ of the algorithm are outlined in \cite{leung2021computable}. The parameter $p$  and the horizon length sequence are design parameters and can be chosen in advance based on the capacity of the available budget. The incurred suboptimality of the algorithm is bounded in the following theorem.

\begin{theorem}
   The incurred suboptimality of the \emph{Dim-SuMPC} algorithm is bounded by
    \begin{equation*}
\mathcal{R}(x_0,\ell_{[0,\hdots,T]}) \leq \frac{\overline{c}_m\|x_0\|_{W_0}^2}{1-\underline{\varepsilon}}\cdot\sum_{j=0}^{p}\varepsilon_{k_{j}}^{2k_{j}}\prod_{i=1}^{j}\overline{d}_i,
\end{equation*}
where $k_0=0$, $\overline{d}_i: =h^2(N_i) \lambda^+(W_i)\|P^{-\frac{1}{2}}\|^2$, $\overline{c}_m: = \max_j \overline{c}(N_j)$, $\overline{h}:= \max_j h(N_j)$ and $\underline{\varepsilon}:= \max_j \varepsilon_{k_j}$.
\end{theorem}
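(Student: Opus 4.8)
The plan is to exploit the additive structure of the finite-horizon cost together with the segment structure induced by the switching times $k_1,\dots,k_p$. First I would partition the horizon $\{0,\dots,T-1\}$ into the $p+1$ intervals $[k_j,k_{j+1})$, with $k_0=0$ and $k_{p+1}=T$, on which the algorithm runs the suboptimal dynamics \eqref{eq:suboptimal_combined} with a fixed horizon $N_j$ and iteration count $\ell_k>\ell^{\star}(N_j)$. By Lemma \ref{lem:k_j} the state satisfies $x_{k_j}\in\Gamma_{N_j}$ at each switching instant, and by Theorem \ref{the:stability_theorem} the set $\Sigma_{N_j}$ is forward invariant, so each interval can be treated as a freshly initialised suboptimal horizon-$N_j$ problem with initial state $x_{k_j}$.

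On interval $j$, Corollary \ref{the:suboptimality_fixed_ell} bounds the incurred suboptimality accrued over that interval by $\overline{c}(N_j)\|x_{k_j}\|_{W_j}^2/(1-\varepsilon_{k_j}^2)$, where $\varepsilon_{k_j}$ is the decay rate of the horizon-$N_j$ dynamics. Since $\varepsilon_{k_j}\in(0,1)$ implies $\varepsilon_{k_j}^2<\varepsilon_{k_j}\le\underline{\varepsilon}$, I would replace each denominator by the uniform $1-\underline{\varepsilon}$ and each constant by $\overline{c}_m=\max_j\overline{c}(N_j)$, so that the total gap is at most $\tfrac{\overline{c}_m}{1-\underline{\varepsilon}}\sum_{j=0}^{p}\|x_{k_j}\|_{W_j}^2$.

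The remaining step is to express each $\|x_{k_j}\|_{W_j}^2$ in terms of $\|x_0\|_{W_0}^2$. I would chain Corollary \ref{lem:x_upper_bound} across consecutive intervals: at each switching instant I convert the Euclidean bound to the weighted norm via $\|x\|_{W_i}^2\le\lambda^+(W_i)\|x\|^2$ and apply the per-interval contraction of Corollary \ref{lem:x_upper_bound}, which carries a factor $h^2(N_i)\|P^{-\frac12}\|^2$ and a contraction $\varepsilon^{2(k_i-k_{i-1})}$. This produces a one-step recursion with gain $\overline{d}_i=h^2(N_i)\lambda^+(W_i)\|P^{-\frac12}\|^2$; unrolling it and using $\sum_{i=1}^{j}(k_i-k_{i-1})=k_j$ to collapse the telescoping exponents gives $\|x_{k_j}\|_{W_j}^2\le \varepsilon_{k_j}^{2k_j}\big(\prod_{i=1}^{j}\overline{d}_i\big)\|x_0\|_{W_0}^2$, with the empty product equal to one for $j=0$. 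Substituting this into the sum yields the claimed bound.

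I expect the main obstacle to be the book-keeping at the horizon switches: ensuring that the forward-invariance and ROA-membership hypotheses of Corollary \ref{the:suboptimality_fixed_ell} and Corollary \ref{lem:x_upper_bound} are genuinely satisfied at every $x_{k_j}$ (this is precisely the role of Lemma \ref{lem:k_j}), and that the benchmark used in the per-interval bound is compatible with the single global optimal cost $J_T(x_0,\boldsymbol{u}^{\star})$ in \eqref{eq:suboptimality_gap}. Relating the per-segment horizon-$N_j$ optimal benchmarks to the full horizon-$N_0$ optimum through the nested value-function bounds of Lemma \ref{lem:optimal_stability} is the most delicate point; the norm conversions and the index tracking of $N_i$ versus $N_{i-1}$ in the constants $\overline{d}_i$ are routine by comparison.
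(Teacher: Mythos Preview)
Your proposal is correct and follows essentially the same strategy as the paper: split the horizon at the switching times $k_j$, apply the per-segment suboptimality bound (the paper invokes Theorem~\ref{the:suboptimality_varying_ell}, you the fixed-$\ell$ Corollary~\ref{the:suboptimality_fixed_ell}, which amounts to the same thing here) to obtain $\mathcal{R}\le \tfrac{\overline{c}_m}{1-\underline{\varepsilon}}\sum_{j=0}^{p}\|x_{k_j}\|_{W_j}^2$, and then chain Corollary~\ref{lem:x_upper_bound} across segments to derive the recursion $\|x_{k_j}\|_{W_j}^2\le\overline{d}_j\,\varepsilon_{k_j}^{2(k_j-k_{j-1})}\|x_{k_{j-1}}\|_{W_{j-1}}^2$, whose unrolling gives the stated sum. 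Your caveat about reconciling the per-segment horizon-$N_j$ benchmark with the single global optimum in \eqref{eq:suboptimality_gap} is well placed; the paper's proof does not make this step explicit either and simply asserts the segmentwise bound.
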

\begin{proof}
    It follows from  the result in Theorem \ref{the:suboptimality_varying_ell}
    \begin{equation*}
        \mathcal{R}(x_0,\ell_{[0,\hdots,T]}) \leq \frac{\overline{c}_m}{1-\underline{\varepsilon}} \sum_{j =0}^p\|x_{k_j}\|^2_{W_j}.
    \end{equation*}
    Then, using the bound in Corollary \ref{lem:x_upper_bound}, for all $j=1,\dots,p$
    \begin{align*}
        \|x_{k_{j}}\|^2_{W_j}&\leq\\
        &h^2(N_j) \lambda^+(W_j)\|P^{-\frac{1}{2}}\|^2 \|x_{k_{j-1}}\|^2_{W_{j-1}}\varepsilon_{k_j}^{2(k_j-k_{j-1})}.
    \end{align*}
    The result follows by repeated application of the above.
    \end{proof}
    
Note that the above bound  is finite since $p$ is finite and the state remains in a forward invariant ROA set at all times.

\section{Numerical Examples}

In this section, we consider the following linearised, continuous-time model  of an inverted pendulum from \cite{leung2021computable}
\begin{equation*}
    A_c=\left[\begin{array}{cc}
0 & 1  \\
\frac{3g}{2L} & 0 \\
\end{array}\right],\: B_c =\left[\begin{array}{c}
0 \\
\frac{3}{{m_b}L^2} 
\end{array}\right],
\end{equation*}
where the state is $x= [\theta,~\dot{\theta}]^\top$, $\theta$ is the angle relative to the unstable equilibrium position and the control input is the applied torque. The parameters are taken to be the same as in \cite{leung2021computable} with $L=1$, $m_b=0.1$ and $g=9.81$. We consider the control of the discretized model of the plant with a sampling time of $T_s=0.1$. The input constraint set is taken to be $\mathcal{U}= [-1,1]$, the cost matrices are  $Q =  I_2$, and $R = 1$ and the initial state is $x_0 = [-\pi/4~\pi/5]^{\top}$. We demonstrate the performance of \emph{DimSuMPC} for this setting, over a control horizon of length $T=150$. In the first example, we  perform $p=3$ switches at times $k_1=15,\; k_2=25$ and $k_3=40$, sequentially decreasing the prediction horizon length from the initial $N=15$ to, respectively, $N_1=10, N_2=8$ and $N_3=2$. 
\begin{figure}
\begin{center}
\includegraphics[width=\columnwidth]{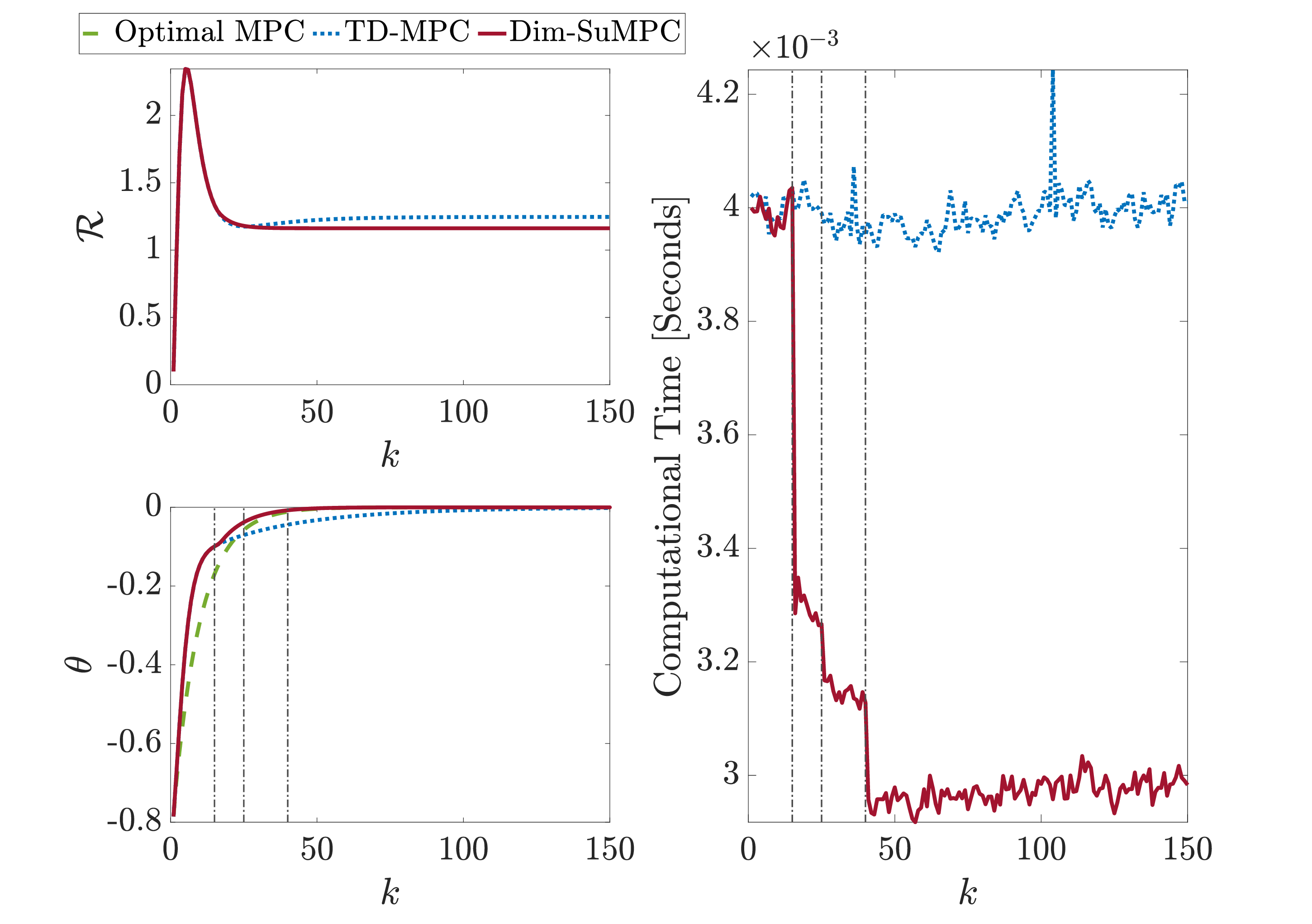}    
\caption{ \emph{Dim-SuMPC} is compared to \emph{TD-MPC} and optimal MPC. The plot on the right shows the reduction in computation time for \emph{Dim-SuMPC}, the top left one contains the incurred suboptimality of both suboptimal methods and the bottom left one shows the convergence of $\theta$. The switching times are marked by vertical gray lines.}
\label{fig:main}
\end{center}
\end{figure}
We let the number of optimization iterations to be fixed at $\ell = 5000$ (for both \emph{TD-MPC} and \emph{DimSuMPC}) to assess the effect of the diminishing horizon length. Note that $\ell^{\star}(N)$ and the switching times $k_j$-s from Lemma \ref{lem:k_j} are over-conservative in practice, and  we use smaller values in the examples. Figure \ref{fig:main} compares the closed-loop performance of  the optimal MPC with that of \emph{TD-MPC} and \emph{DimSuMPC}. As can be seen in the bottom left plot, the proposed scheme shows a comparable convergence performance to \emph{TD-MPC} and even suffers a lower  incurred suboptimality. As the problem size becomes smaller, the decrease in computational time of \emph{DimSuMPC} can be clearly observed at the switching times, marked by gray vertical lines on the right side figure. This saved time allows one to perform more iterative updates $\ell$. To demonstrate this, consider a second example where only a finite computational budget is available that allows the execution of \emph{TD-MPC} with $N=15$ and $\ell=5000$. Figure \ref{fig:fixed_budget} demonstrates how \emph{DimSuMPC} can achieve better convergence using (roughly) the same computational power by changing at $k_1$ to new parameters $N_1=2$ and $\ell=6500$. In other words, decreasing $N$ allows for an increase in $\ell$ resulting in a potentially improved performance while staying within the same computational budget. In both examples, the computation time is measured by taking the average of $200$ runs of the same experiment using the  \lstinline[style=Matlab-editor]{tic/toc} command in MATLAB. 
\begin{figure}
\begin{center}
\includegraphics[width=\columnwidth]{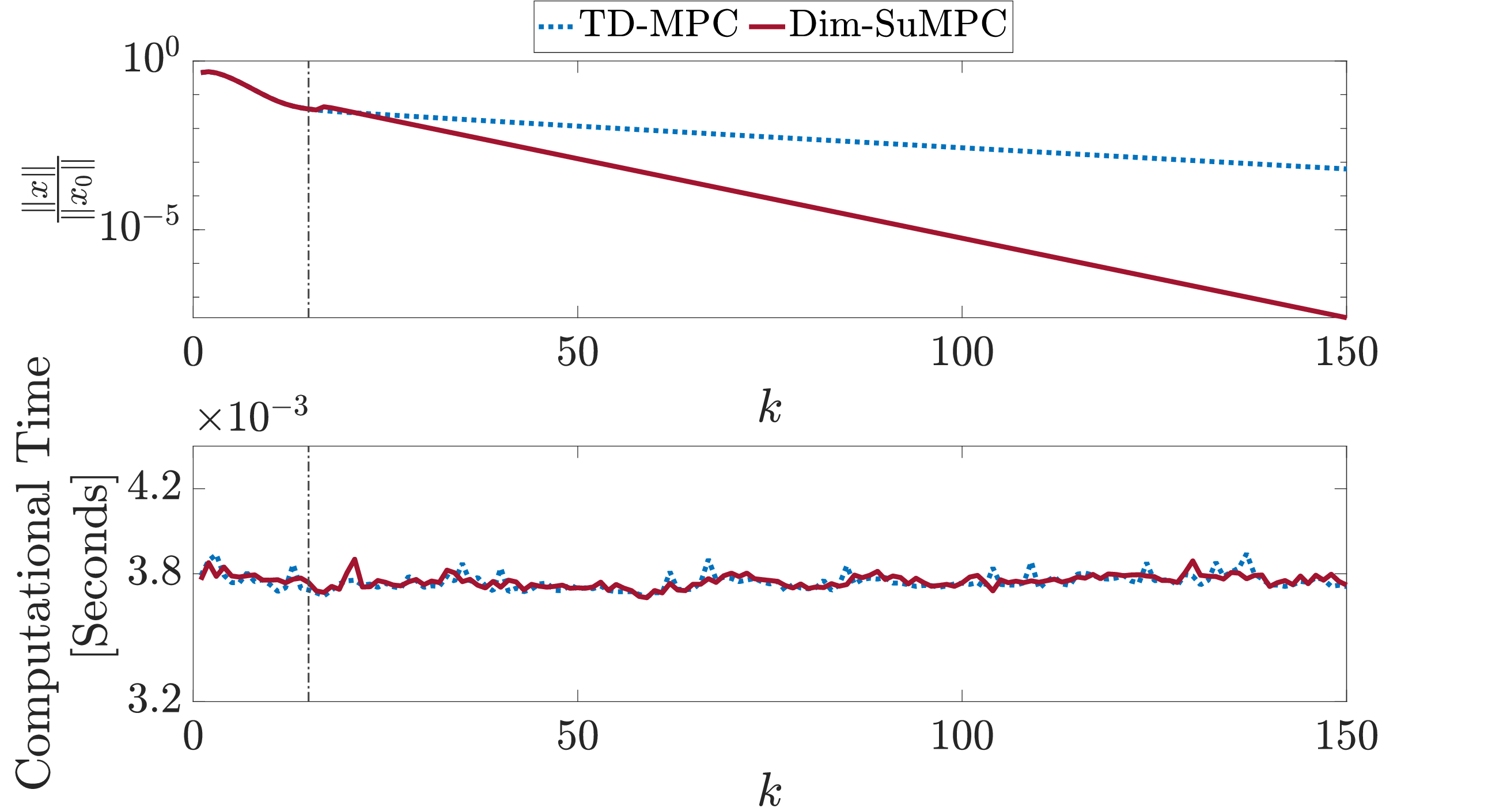}    
\caption{ \emph{DimSuMPC} spends the same computational effort as \emph{TD-MPC} (bottom plot) while achieving a faster convergence of the state (top plot).}
\label{fig:fixed_budget}
\end{center}
\end{figure}
\section{Conclusions}

We establish an explicit expression for the rate of convergence of a closed-loop system under a suboptimal implementation of the LQMPC algorithm subject to input constraints, where only a finite number of iterative optimization steps are performed using the projected gradient descent method. The bound is used to provide finite-time performance guarantees of the scheme in terms of the additional cost incurred due to suboptimality. A novel diminishing horizon suboptimal MPC algorithm is then proposed that operates by decreasing the prediction horizon length at certain switching times and thus reduces computational complexity. Possible directions for future research include a deeper analysis of the proposed suboptimal scheme and the derivation of less conservative bounds for the incurred suboptimality by exploring the properties of optimal MPC.


\begin{appendix}
\subsection{System Matrices for the POCP Problem}
\label{sec:matrices_definition}

As also shown in \cite{liao2021analysis}, the matrices in \eqref{eq:condensed_matrices} are given by $H=\hat{B}^\top \hat{H} \hat{B}+\left(I_N \otimes R\right)$, $G=\hat{B}^\top \hat{H} \hat{A}, W=Q+\hat{A}^\top \hat{H} \hat{A},\\ \hat{H}=\left[\begin{array}{cc}\left(I_N \otimes Q\right) & 0 \\ 0 & P\end{array}\right]$, 
\begin{equation*}
\hat{B}=\left[\begin{array}{ccc}
0 & 0 & 0 \\
B & 0 & 0 \\
\vdots & \ddots & \vdots \\
A^{N-1} B & \cdots & B
\end{array}\right], \text { and } \hat{A}=\left[\begin{array}{c}
I \\
A \\
\vdots \\
A^N
\end{array}\right] .
\end{equation*}

\end{appendix}
\section*{Acknowledgements}
The authors thank Dominic Liao-McPherson for fruitful insights and discussions on the topic.

\bibliographystyle{ieeetr}
\bibliography{bibliography.bib}

\end{document}